\newtheorem{theorem}{Theorem}[section]
\newtheorem{proposition}[theorem]{Proposition}
\theoremstyle{definition}
\newtheorem{definition}[theorem]{Definition}
\newtheorem{remark}[theorem]{Remark}
\newtheorem{example}[theorem]{Example}
\numberwithin{equation}{section}
\newcommand{\ind}{1\hspace{-2.1mm}{1}}
\newcommand{\CC}{\mathbb{C}}
\newcommand{\RR}{\mathbb{R}}
\newcommand{\PP}{\mathbb{P}}
\newcommand{\QQ}{\mathbb{Q}}
\newcommand{\D}{\mathrm{d}}
\newcommand{\I}{\mathrm{i}}
\newcommand{\Cr}{\mathrm{C}}
\newcommand{\Put}{\mathrm{P}}
\newcommand{\SVI}{\mathrm{SVI}}
\newcommand{\Mm}{\mathcal{M}}
\newcommand{\Oo}{\mathcal{O}}
\newcommand{\Xx}{\mathcal{X}}
\newcommand{\QPE}{\mathrm{QPE}}
\newcommand{\BS}{\mathrm{BS}}
\newcommand{\QFT}{\mathrm{QFT}}
\newcommand{\KL}{\mathrm{KL}}
\newcommand{\NN}{\mathbb{N}}
\newcommand{\Ff}{\mathcal{F}}
\newcommand{\Hh}{\mathcal{H}}
\newcommand{\EE}{\mathbb{E}}
\newcommand{\Dd}{\mathcal{D}}
\newcommand{\ek}{\mathfrak{e}}
\newcommand{\Dr}{\mathfrak{D}}
\newcommand{\Ss}{\mathcal{S}}
\newcommand{\Rr}{\mathcal{R}}
\newcommand{\ww}{\boldsymbol{\mathrm{w}}}
\newcommand{\xx}{\boldsymbol{\mathrm{x}}}
\newcommand{\yy}{\boldsymbol{\mathrm{y}}}
\newcommand{\XX}{\boldsymbol{\mathrm{X}}}
\newcommand{\xxw}{\widetilde{\xx}}
\newcommand{\thbg}{\boldsymbol{\theta}_{G}}
\newcommand{\thbd}{\boldsymbol{\theta}_{D}}
\newcommand{\thbgs}{\boldsymbol{\theta}_{G}^*}
\newcommand{\wwd}{\ww_{D}}
\newcommand{\half}{\frac{1}{2}}
\newcommand{\Ffq}{{}_{\mathrm{q}}\Ff}
\newcommand{\qt}{m}
\newcommand{\eps}{\varepsilon}
\newcommand{\tar}{\mathrm{target}}
\newcommand{\imp}{\mathrm{imp}}
\newcommand{\out}{\mathrm{out}}
\newcommand{\fake}{\textit{Fake}}
\newcommand{\real}{\textit{Real}}
\newcommand{\Df}{\textbf{D}}
\newcommand{\Gf}{\textbf{G}}
\newcommand{\Id}{\mathrm{I_{d}}}
\newcommand{\E}{\mathrm{e}}
\newcommand{\Tr}{\mathrm{Tr}}
\newcommand{\CR}{{}_{\mathrm{c}}\mathrm{R}}
\newcommand{\CRY}{\CR_{\mathrm{Y}}}
\newcommand{\Hg}{\mathrm{H}}
\newcommand{\Cg}{\mathrm{C}}
\newcommand{\Ug}{\mathrm{U}}
\newcommand{\Yg}{\mathrm{Y}}
\newcommand{\Xg}{\mathrm{X}}
\newcommand{\RY}{\mathrm{R}_{\mathrm{Y}}}
\begin{document}

\title{A Quantum Generative Adversarial Network for Distributions}

\author{Amine Assouel}
\address{ENS Paris-Saclay}
\email{amine.assouel@ens-paris-saclay.fr}

\author{Antoine Jacquier}
\address{Department of Mathematics, Imperial College London, and Alan Turing Institute}
\email{a.jacquier@imperial.ac.uk}

\author{Alexei Kondratyev}
\address{Abu Dhabi Investment Authority (ADIA), and Department of Mathematics, Imperial College London}
\email{a.kondratyev@imperial.ac.uk}

\date{\today}
\thanks{The authors would like to thank Konstantinos Kardaras and Alexandros Pavlis for insightful discussion on quantum algorithms and spins.
AJ acknowledges financial support from the EPSRC EP/T032146/1 grant.}
\keywords{Quantum Computing, GAN, Quantum Phase Estimation, SVI, volatility}
\subjclass[2010]{81P68, 81-08, 91G20}

\maketitle
\begin{abstract}
Generative Adversarial Networks are becoming a fundamental tool in Machine Learning, in particular in the context of improving the stability of deep neural networks.
At the same time, recent advances in Quantum Computing have shown that, despite the absence of a fault-tolerant quantum computer so far, quantum techniques are providing exponential advantage over their classical counterparts.
We develop a fully connected Quantum Generative Adversarial network and show how it can be applied in Mathematical Finance, with a particular focus on volatility modelling.
\end{abstract}

\section{Introduction}
Machine Learning has become ubiquitous, with applications in nearly every aspects of  society today, in particular for image and speech recognition, traffic prediction, product recommendation, medical diagnosis, stock market trading, fraud detection.
One specific Machine Learning tool, deep neural networks, has seen tremendous  developments over the past few year.
Despite clear advances, these networks however often suffer from the lack of training data: in Finance, time series of a stock price only occur once, physical experiments are sometimes expensive to run many times...
To palliate this, attention has turn to methods aimed at reproducing existing data with a high degree of accuracy. 
Among these, Generative Adversarial Networks (GAN) are a class of unsupervised Machine Learning devices whereby two neural networks, a generator and a discriminator, contest against each other in a minimax game in order to generate information similar to a given dataset~\cite{goodfellow2014generative}. 
They have been successfully applied in many fields over the past few years, in particular for image generation~\cite{GenImages, GenImages2}, medicine~\cite{Med1, Med2}, and in Quantitative Finance~\cite{Ruf2020}. 
They however often suffer from instability issues, vanishing gradient and potential mode collapse~\cite{modecollapse}. 
Even Wasserstein GANs, assuming the Wasserstein distance from optimal transport instead of the classical Jensen–Shannon Divergence, are still subject to slow convergence issues and potential instability~\cite{GANImprov3}.

In order to improve the accuracy of this method, Lloyd and Weedbrook~\cite{QuGAN} 
and Dallaire-Demers and Killoran~\cite{QuGANDallaire}
simultaneously introduced a quantum component to GANs, 
where the data consists of quantum states or classical data while the two players are equipped with quantum information processors. 
Preliminary works have demonstrated the quality of this approach, 
in particular for high-dimensional data, 
thus leveraging on the exponential advantage of quantum computing~\cite{advQuGAN}. 
An experimental proof-of-principle  demonstration of QuGAN in
a superconducting quantum circuit was shown in~\cite{QuGANSuper},
while in~\cite{QuGAN2} the authors made use 
of quantum fidelity measurements to propose a loss function acting on quantum states.
Further recent advances, providing more insights on how quantum entanglement can play a decisive role, have been put forward in~\cite{QENtang}.
While actual Quantum computers are not available yet, Noisy intermediate-scale quantum (NISQ) algorithms are already here and allow us to perform quantum-like operations\cite{NISQ}. 

We focus here on building a fully connected Quantum Generative Adversarial network (QuGAN)~\footnote{The terminology `QuGAN' should not be confused with `QGAN', used to denote quantised versions of GAN, as in~\cite{QuantizedGAN}, nor with `Quant GAN', which refers~\cite{QuantGAN} to the use of GAN in Quantitative Finance; neither Quant GAN nor QGAN are related whatsoever to Quantum Computing.}, 
namely an entire quantum counterpart to a classical GAN.
A quantum version of GAN was first introduced in~\cite{QuGAN} and~\cite{QuGANDallaire}, 
showing that it may exhibit an exponential advantage
over classical adversarial networks.

The paper is structured as follows: 
In Section~\ref{sec:Network}, we recall the basics of a classical neural network and show how to build a fully quantum version of it.
This is incorporated in the full architecture of a Quantum Generative Adversarial Network in Section~\ref{sec:QuGAN}.
Since classical GANs are becoming an important focus in Quantitative Finance~\cite{FinGAN1, FinGAN2, FinGAN3, QuantGAN}, 
we provide an example of application for QuGAN for volatility modelling in Section~\ref{sec:FinAppli}, hoping to bridge the gap between the Quantum Computing and the Quantitative Finance communities.
For completeness, we gather a few useful results from Quantum Computing in Appendix~\ref{sec:Review}.

\section{A quantum version of a non-linear quantum neuron}\label{sec:Network}

The quantum phase estimation procedure lies at the very core of building a quantum counterpart for a neural network. In this part, we will mainly focus on how to build a single quantum neuron.
As the fundamental building block of artificial neural networks,
a neuron classically maps a normalised input 
$\xx=(x_0,\ldots,x_{n-1})^{\top} \in [0,1]^n$ to an output $g(\xx^{\top} \ww)$, 
where $\ww = (w_0,\ldots,w_{n-1})^{\top} \in [-1,1]^n$ is the weight vector, for some activation function~$g$.
The non-linear quantum neuron requires the following steps:
\begin{itemize}
\item Encode classical data into quantum states (Section~\ref{sec:QEncoding});
\item Perform the (quantum version of the) inner product $\xx^{\top}\ww$ (Section~\ref{sec:QInnerProd});
\item Applying the (quantum version of the) non-linear activation function (Section~\ref{sec:QActivation}).
\end{itemize}

Before diving into the quantum version of neural networks, 
we recall the basics of classical (feedforward) neural networks,
which we aim at mimicking.

\subsection{Classical neural network architecture}

Artificial neural networks (ANNs) are a subset of machine learning and lie at the heart of Deep Learning algorithms. Their name and structure are inspired by the human brain~\cite{NeuroML}, mimicking the way that biological neurons signal to one another.
They consist of several layers, with an input layer, one or more hidden layers, and an output layer, each one of them containing several nodes. 
An example of ANN is depicted in Figure~\ref{fig:neural_net}.
\begin{figure}[h!]
    \centering
    \begin{neuralnetwork}[height=4]
        \newcommand{\x}[2]{$x_#2$}
        \newcommand{\y}[2]{$\widehat{y}_#2$}
        \newcommand{\hfirst}[2]{\small $h^{(1)}_#2$}
        \newcommand{\hsecond}[2]{\small $h^{(2)}_#2$}
        \inputlayer[count=3, bias=true, title=Input\\layer, text=\x]
        \hiddenlayer[count=4, bias=false, title=Hidden\\layer 1, text=\hfirst] \linklayers
        \hiddenlayer[count=3, bias=false, title=Hidden\\layer 2, text=\hsecond] \linklayers
        \outputlayer[count=2, title=Output\\layer, text=\y] \linklayers
\end{neuralnetwork}
    \caption{ANN with one input layer, 2 hidden layers and one output layer.}
    \label{fig:neural_net}
\end{figure}

For a given an input vector  $\xx = (x_1,\ldots,x_n)\in\RR^n$, the connectivity between~$\xx$ 
and the $j$-th neuron~$h^{(1)}_j$ of the first hidden layer
(Figure~\ref{fig:neural_net}) is done via 
$h^{(1)}_j=\sigma_{1,j}(b_{1,j}+\sum_{i=1}^{n} x_{i}w_{i,j})$,
where $\sigma_{1,j}$ is called the activation function.
By denoting  $H_k\in\RR^{s_k}$ the vector of the k-th hidden layer, where $s_k\in\mathbb{N^*}$ and $H_k=(h^{(k)}_1,\ldots,h^{(k)}_{s_k})$ the connectivity model generalises itself to the whole network:
\begin{equation}\label{equation neural network}
   h_j^{(k+1)}=\sigma_{k+1,j}\left(b_{k+1,j}+\sum_{i=1}^{s_k} h_{i}^{(k)}w_{i,k+1,j}\right),
\end{equation}
where $j\in\{1,\ldots,s_{k+1}\}$.
Therefore for $l$ hidden layers the entire network is parameterised by 
$\Omega=(\sigma_{k,r_k},b_{k,r_k},w_{v_k,k,r_k})_{k,r_k,v_k}$ where first $1\leq k \leq l $, 
then $1\leq  r_k\leq s_k$ and $1\leq  v_k \leq s_{k-1}$. 
For a given training data set of size $N$, $(X_i,Y_i)_{i=1,\ldots,N}$, 
the goal of a neural network is to build a mapping between $(X_i)_{i=1,\ldots,N}$ and $(Y_i)_{i=1,\ldots,N}$.
The idea for the neural network structure comes from the Kolmogorov-Arnold representation Theorem~\cite{Arnold_1957, Kolmogorov_1956}:
\begin{theorem}\label{Kolmogorov Theorem}
Let $f: [0,1]^d\xrightarrow[]{}\RR$ be a continuous function. There exist sequences
$(\Phi_i)_{i=1, \ldots, 2d}$  and $(\Psi_{i,j})_{i=1, \ldots, 2d; i=1,\ldots, d}$ of continuous functions from~$\RR$ to~$\RR$ such that for all $(x_1,\ldots,x_d)\in[0,1]^d$,
\begin{equation}\label{Kolmogorov}
      f(x_1,\ldots,x_d)=\sum_{i=1}^{2d}\Phi_i\left(\sum_{j=1}^{d}\Psi_{i,j}(x_j)\right).
\end{equation}
\end{theorem}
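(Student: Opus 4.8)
The plan is to recognize the statement as (a form of) the Kolmogorov--Arnold superposition theorem and to prove it via the now-standard iterative-approximation scheme of Kolmogorov and Lorentz, in which the inner functions are constructed once and for all and the outer functions are produced by a contraction on the uniform norm of a residual. Throughout I write $y_i(\xx) := \sum_{j=1}^{d}\Psi_{i,j}(x_j)$ for the $i$-th inner coordinate, with $\xx=(x_1,\ldots,x_d)$, so that the target representation reads $f(\xx)=\sum_{i=1}^{2d}\Phi_i(y_i(\xx))$.

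First I would construct the inner functions $\Psi_{i,j}$. The key property to engineer is a separation/covering property relative to a family of grids on $[0,1]^d$: fix a small mesh $\delta$, let the $i$-th grid partition $[0,1]^d$ into cubes of side $\delta$ translated by an $i$-dependent shift, and choose the $\Psi_{i,j}$ (for instance monotone, built from a single H\"older function with rationally independent coefficients, following Sprecher) so that, for each fixed $i$, the map $\xx\mapsto y_i(\xx)$ sends distinct cubes of the $i$-th grid into pairwise disjoint intervals of $\RR$. This makes each $\Phi_i$ well defined cube by cube, and the shifts are tuned so that the grids overlap evenly. Producing such $\Psi_{i,j}$ rigorously (injectivity on cubes plus the even-overlap condition, with continuity) is technical but classical, and can alternatively be obtained non-constructively as a dense $G_\delta$ set via the Baire category theorem.

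Next I would build the outer functions. Given $f$ with $\|f\|_\infty\le 1$, define one layer of outer functions by setting $\Phi_i$ on the interval $y_i(Q)$ to be a fixed fraction of the value $f$ at the centre of the cube $Q$, interpolating continuously across the gaps. The crucial estimate is a pigeonhole: every $\xx$ lies in the interior of enough grids simultaneously, so that the good terms reproduce a definite fraction of $f(\xx)$ while the remaining terms are controlled by the continuity of $f$; this yields $\|f-\sum_i\Phi_i(y_i(\cdot))\|_\infty\le\theta$ for some $\theta<1$ independent of $f$. Applying the same construction to the residual and iterating produces $\Phi_i=\sum_{k\ge 1}\Phi_i^{(k)}$ as a geometrically convergent series; uniform convergence preserves continuity and yields the exact identity in the limit.

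The hard part, and the point on which the stated count hinges, is exactly this pigeonhole-plus-contraction step. The contraction factor is governed by how many grids are good at a worst-case point: in each coordinate a point is within $\beta$ of a grid boundary for at most one shift, so over $d$ coordinates it is bad for at most $d$ of the grids. To force a strict majority of good terms, hence $\theta<1$, one needs at least $d+1$ good terms out of the total, that is a total of $2d+1$ outer terms; with only $2d$ terms the worst-case split is $d$ good versus $d$ bad and the contraction degenerates. This is not cosmetic: by Sternfeld's theorem the number $2d+1$ is optimal for a universal inner layer when $d\ge 2$, so obtaining the representation with exactly $2d$ outer terms as written is the genuinely delicate endpoint. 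I would attempt it by exploiting the extra freedom that here $\Phi_i,\Psi_{i,j}$ may depend on $f$ (the statement quantifies existence per $f$, and so does not demand a universal inner layer), or else read the statement with the standard count $2d+1$; flagging and resolving this discrepancy is, in my view, the main obstacle to a literal proof.
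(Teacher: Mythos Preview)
The paper does not prove this theorem at all: it is stated as background, attributed to Kolmogorov and Arnold via the citations~\cite{Arnold_1957, Kolmogorov_1956}, and used only to motivate the two-hidden-layer structure of a feedforward network. There is therefore nothing in the paper to compare your argument against.

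Your outline is the standard Kolmogorov--Lorentz--Sprecher route and is a reasonable sketch for a full proof. You are also right to flag the count: the classical statement has $2d+1$ outer terms (indices $0,\ldots,2d$), and Sternfeld's rigidity result shows this is sharp for a universal inner layer when $d\ge 2$. The version printed here with only $2d$ terms appears to be a slip in the paper rather than a strengthening you are expected to establish; your pigeonhole analysis correctly explains why the contraction argument cannot close with $2d$ grids. So the ``main obstacle'' you identify is a typo in the statement, not a gap in your reasoning.
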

The representation of~$f$ resembles a two-hidden-layer ANN, where $\Phi_i ,\Psi_{i,j}$ are the activation functions.

\subsection{Quantum encoding}\label{sec:QEncoding}
Since a quantum computer only takes qubits as inputs, 
we first need to encode the classical data into a quantum state.
For $x_{j}\in [0,1]$ and $p\in\NN$,
denote by $\frac{x_{j,1}}{2} + \frac{x_{j,2}}{2^2} + \ldots + \frac{x_{j,p}}{2^p}$ 
the $p$-binary approximation of~$x_{j}$,
where each $x_{j,k}$ belongs to $\{0,1\}$, for $k\in\{1,2,\ldots,p\}$. 
The quantum code for the classical value~$x_j$ is then defined via this approximation as
$$
\ket{x_j} := \ket{x_{j,1}}\otimes\ket{x_{j,2}}\otimes\ldots\otimes\ket{x_{j,p}}=\ket{x_{j,1}x_{j,2}\ldots x_{j,p}},
$$
and therefore the encoding for the vector $\xx$ is 
\begin{equation}\label{eq:xVecBin}
\ket{\xx}
 := \ket{x_{0,1} x_{0,2}\ldots x_{0,p}}\otimes\ldots\otimes\ket{x_{n-1,1}\ldots x_{n-1,p}}.
\end{equation}

\subsection{Quantum inner product}\label{sec:QInnerProd}
We now show how to build the quantum version of the inner product performing the operation
$$
\ket{0}^{\otimes m}\ket{\xx}\xrightarrow[]{}\ket{\xxw^{\top} \ww}\ket{\xx}.
$$

Denote the two-qubit controlled Z-Rotation gate by
$$
\CR_z(\alpha)= \begin{pmatrix}
1 & 0 & 0 & 0\\
0 & 1 & 0 & 0\\
0 & 0 & 1 & 0\\
0 & 0 & 0 & \E^{2\I\pi \alpha}
\end{pmatrix},
$$
where~$\alpha$ is the phase shift with period~$\pi$.
For $x\in\{0,1\}$ and $\ket{+}:=\frac{1}{\sqrt{2}}(\ket{0}+\ket{1})$, note that, for $k\in\NN$,
$$
\CR_z\left(\frac{1}{2^k}\right)
\left(\ket{+}\ket{x}\right)
=\frac{1}{\sqrt{2}}\left(\ket{0}\ket{x} + \exp\left\{\frac{2\I\pi x}{2^k}\right\}\ket{1}\ket{x}\right)
$$
Indeed, either $x=0$ and then $\ket{x}=\ket{0}$ so that
$$
\CR_z\left(\frac{1}{2^k}\right)
\left(\ket{+}\ket{x}\right)
=\frac{1}{\sqrt{2}}
\left(\ket{0}\ket{0}+\ket{1}\ket{0}\right),
$$
or $x=1$ and hence
$$
\CR_z\left(\frac{1}{2^k}\right)
\left(\ket{+}\ket{x}\right)
=\frac{1}{\sqrt{2}}\left(\ket{0}\ket{1} + \exp\left\{\frac{2\I\pi}{2^k}\right\}\ket{1}\ket{1}\right).
$$
The gate $\CR_z\left(\alpha\right)$ applies to two qubits where the first one constitutes what is called an ancilla qubit since it controls the computation. From there one should define the ancilla register that is composed of all the qubits that are used as controlled qubits.

\subsubsection{The case where with $m$ ancilla qubits and $\xx^{\top}\ww \in \{0,\ldots, 2^m-1\}$.}

The first part of the circuit consists of applying Hadamard gates on the 
ancilla register $\ket{0}^{\otimes m}$, which produces
\begin{equation}\label{output inner}
\Hg^{\otimes m}\ket{0}^{\otimes m}\ket{\xx}
=\left(\frac{1}{\sqrt{2^m}}\sum_{j=0}^{2^m-1}\ket{j}\right)\otimes\ket{\xx}.
\end{equation} 

The goal here is then to encode as a phase the result of the inner product $\xx^{\top} \ww$.
With the binary approximation~\eqref{eq:xVecBin} for ~$\ket{\xx}$ 
and~$m$ ancilla qubits, define for $l \in \{1,\ldots,m\}$, $j\in\{0,\ldots,n-1\}$ and $k \in \{1,\ldots,p\}$, 
$\CR_z^{l,j,k}\left(\alpha\right)$, the $\CR_z(\alpha)$ matrix applied to  the qubit $\ket{x_{j,k}}$ with the $l$-th qubit of the ancilla register as control.
Finally, introduce the unitary operator
\begin{equation}\label{eq:GatesUwm}
\Ug_{\ww,m}
 := \prod_{l=0}^{m-1}\left\{\prod_{j=0}^{n-1}\prod_{k=1}^{p}\CR_z^{m-l,j,k}\left(\frac{w_j}{2^{m+k}}\right)\right\}^{m-l}.
\end{equation}

\begin{proposition}
The following identity holds for all $n,p,m \in\NN$:
\begin{equation}\label{comparison}
\Ug_{\ww,m}\Hg^{\otimes m}\ket{0}^{\otimes m}\ket{\xx}
= \left(\frac{1}{\sqrt{2^m}}\sum_{j=0}^{2^m-1}
    \exp\left\{2\I\pi j \frac{\xxw^{\top}\ww}{2^m}\right\}\ket{j}\right)\otimes\ket{\xx},
\end{equation}
where
$$
\xxw^{\top}\ww
 := \sum_{j=0}^{n-1}w_j\sum_{k=1}^{p}\frac{x_{j,k}}{2^k}
$$ 
is the $p$-binary approximation of~$\xx^\top\ww$.
\end{proposition}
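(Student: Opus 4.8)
The plan is to exploit the fact that $\ket{\xx}$ is a computational basis state, so every $\CR_z$ gate occurring in $\Ug_{\ww,m}$ acts on the data register diagonally — as a pure phase — and therefore all of these gates mutually commute and leave $\ket{\xx}$ untouched. Consequently it suffices, by linearity of $\Ug_{\ww,m}$, to determine the phase it imparts on each term $\ket{j}\otimes\ket{\xx}$ of the equally weighted superposition produced by the Hadamard layer, which is given by~\eqref{output inner}.

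First I would fix an ancilla basis index $j\in\{0,\ldots,2^m-1\}$ and write its binary expansion $j=\sum_{s=1}^m j_s 2^{s-1}$ with $j_s\in\{0,1\}$, so that $\ket{j}$ is an eigenvector of the ``control on the $s$-th ancilla qubit'' operation with eigenvalue $j_s$. For a single block $\prod_{\nu=0}^{n-1}\prod_{k=1}^{p}\CR_z^{s,\nu,k}\big(w_\nu/2^{m+k}\big)$ controlled on the $s$-th ancilla qubit, the one-qubit computation recalled just before the proposition (the action of $\CR_z$ on $\ket{+}\ket{x}$, which more generally gives $\CR_z(\alpha)\ket{c}\ket{x}=\E^{2\I\pi\alpha c x}\ket{c}\ket{x}$ for $c,x\in\{0,1\}$) shows that on $\ket{j}\otimes\ket{\xx}$ this block multiplies by
\[
\exp\left\{2\I\pi\, j_s \sum_{\nu=0}^{n-1}\sum_{k=1}^{p}\frac{w_\nu x_{\nu,k}}{2^{m+k}}\right\}
=\exp\left\{2\I\pi\, j_s\,\frac{\xxw^\top\ww}{2^m}\right\},
\]
the equality being precisely the definition of the $p$-binary approximation $\xxw^\top\ww=\sum_{\nu}w_\nu\sum_k x_{\nu,k}2^{-k}$ (itself inherited from the $p$-binary encoding of each $x_\nu$ in Section~\ref{sec:QEncoding}) after pulling out the common factor $2^{-m}$.

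Next I would multiply these contributions over the whole of $\Ug_{\ww,m}$, grouping the factors according to which ancilla qubit $s$ does the controlling. The substantive point — and the step that needs to be checked carefully — is that the block controlled on the $s$-th ancilla qubit occurs with a total multiplicity equal to the positional weight $2^{s-1}$ of that qubit, so that the accumulated phase exponent on $\ket{j}\otimes\ket{\xx}$ is $2\I\pi\big(\sum_{s=1}^m 2^{s-1}j_s\big)\xxw^\top\ww/2^m=2\I\pi\, j\,\xxw^\top\ww/2^m$; this is where the binary expansion of $j$ is reconstructed and where the nested-product structure of~\eqref{eq:GatesUwm} must be bookkept against the two interleaved index sets (the $m$ ancilla qubits versus the $np$ data qubits $\ket{x_{\nu,k}}$). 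Granting this, summing the resulting phases against $\frac{1}{\sqrt{2^m}}\sum_{j=0}^{2^m-1}\ket{j}$ from~\eqref{output inner} gives exactly the right-hand side of~\eqref{comparison}, with the degenerate cases ($m=0$, or $p=0$, or $n=0$) handled by the usual convention that an empty product is the identity. The only real obstacle is therefore the combinatorial one just described; everything else is the routine propagation of a diagonal phase through a product of commuting gates.
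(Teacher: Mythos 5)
Your overall strategy---use that every $\CR_z$ gate acts diagonally on the computational basis state $\ket{j}\otimes\ket{\xx}$, compute the phase contributed by the block controlled on each ancilla qubit, and then reassemble the binary expansion of $j$---is sound, and it is in fact more general than the paper's own proof, which only verifies the case $n=p=m=2$ explicitly and declares the general case analogous. Your single-block computation is correct: the block controlled on one ancilla qubit multiplies $\ket{j}\otimes\ket{\xx}$ by $\exp\{2\I\pi\, j_s\,\xxw^{\top}\ww/2^m\}$, where $j_s$ is the corresponding bit of $j$.

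However, the one step you explicitly leave unproved (``Granting this\dots'') is exactly where the content of the proposition lies, and it does not survive inspection against the definition~\eqref{eq:GatesUwm}. In $\Ug_{\ww,m}$ the block controlled on ancilla qubit $s=m-l$ is raised to the power $m-l=s$, so it occurs with multiplicity $s$, not with the positional weight $2^{s-1}$ that you assume. The accumulated phase on $\ket{j}\otimes\ket{\xx}$ is therefore $\exp\bigl\{2\I\pi\bigl(\sum_{s=1}^{m} s\, j_s\bigr)\xxw^{\top}\ww/2^m\bigr\}$, which equals the claimed $\exp\{2\I\pi\, j\,\xxw^{\top}\ww/2^m\}$ for every $j$ only if $s=2^{s-1}$ for all $s\le m$, i.e.\ only for $m\le 2$; no relabelling of which ancilla controls which block can repair this, since $\{1,2,\dots,m\}$ and $\{1,2,4,\dots,2^{m-1}\}$ differ as multisets once $m\ge 3$. (This is also why the paper's $m=2$ computation cannot distinguish the two conventions.) So, as written, your proof has a genuine gap: the multiplicity claim is both unverified and, for the operator as literally defined, false for $m\ge 3$. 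What your bookkeeping really shows is that the exponent in~\eqref{eq:GatesUwm} ought to be $2^{m-l-1}$---the standard controlled-$\Ug^{2^t}$ powers of quantum phase estimation---in which case your argument closes; to complete the proof you must either carry out the count against such a corrected definition of $\Ug_{\ww,m}$ or restrict the statement to $m\le 2$.
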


\begin{proof}
We prove the proposition for the case $n=p=m=2$ for simplicity and it is clear that the general proof is analogous. Therefore we consider $
\Ug_{\ww,2} :=
\left\{\prod_{j=0}^{1}\prod_{k=1}^{2} \CR_z^{2,j,k}\left(\frac{w_j}{2^{2+k}}\right)\right\}^2 \prod_{j=0}^{1}\prod_{k=1}^{2} \CR_z^{1,j,k}\left(\frac{w_j}{2^{2+k}}\right).$
First, we have 
$$
 \prod_{j=0}^{1}\prod_{k=1}^{2} \CR_z^{1,j,k}\left(\frac{w_j}{2^{2+k}}\right)
 \otimes\left(\frac{1}{\sqrt{2^2}}\sum_{j=0}^{2^2-1}\ket{j}\right)
 \otimes\ket{\xx}
 =\frac{1}{\sqrt{2^2}}\left(\ket{0}+\ket{1}\right)\left(\ket{0}+\exp\left\{2\I\pi  \frac{\xxw^{\top}\ww}{2^2}\right\}\ket{1}\right)\otimes\ket{\xx},
$$
result to which we apply $\left\{\prod_{j=0}^{1}\prod_{k=1}^{2} \CR_z^{2,j,k}\left(\frac{w_j}{2^{2+k}}\right)\right\}^2$ which yields
$$
\frac{1}{\sqrt{2^2}}\left(\ket{0}+\exp\left\{2\I\pi 2 \frac{\xxw^{\top}\ww}{2^2}\right\}\ket{1}\right)
\otimes
\left(\ket{0}+\exp\left\{2\I\pi  \frac{\xxw^{\top}\ww}{2^2}\right\}\ket{1}\right)\otimes\ket{\xx},
$$
achieving the proof of~\eqref{comparison}.
\end{proof}
From the definition of the Quantum Fourier transform in~\eqref{QFT}, if $\xxw^{\top}\ww=k\in\{0,\ldots,2^m-1\}$, the resulting state is
$$
\Ug_{\ww,m}\left(\left(\Hg^{\otimes m}\ket{00}\right)\otimes\ket{\xx}\right)
= \left(\Ffq\ket{k}\right)\otimes\ket{\xx}
=\left(\Ffq\ket{\xxw^{\top} \ww}\right)\otimes\ket{\xx}.
$$
Thus only applying the Quantum Inverse Fourier Transform would be enough to retrieve $\ket{\xxw^{\top}\ww}$.
The pseudo-code is detailed in Algorithm~\ref{alg:QInnerProd} and the quantum circuit in the case $n=p=m=2$ is depicted in Figure~\ref{fig:QInnerProduct} (and detailed in Example~\ref{ex:Innerprod2}).

\begin{algorithm}
\caption{Quantum Inner Product (QIP)
$(\ww, \xx, \Ug_{\ww,m}, m, p, \eps)$}
\begin{algorithmic}
\State \textbf{Input:} 
    $\ww\in[-1,1]^n, \xx\in [0,1]^n$, $\eps>0$ the probability to mismeasure $\xx^\top\ww$, $\Ug_{\ww,m}$ unitary matrix, $m$ the number of ancilla qubits, $p$ the precision used for the binary fraction of each component of $\xx$, together with the constraint $\xx^\top\ww \in \{0,\ldots, 2^m-1\}$.
    \newline
    \State \textbf{Procedure:} \newline
    \State \textbf{1}.  $\ket{0}^{\otimes m}\ket{\xx}$
    \Comment{Initial state with $\ket{0}^{\otimes m}$ as ancilla register and $\ket{\xx}$ as data made of $n\times p$ qubits}
    \newline
    \State \textbf{2}. $\ket{0}^{\otimes m}\ket{\xx}\mapsto \Hg^{\otimes m}\ket{0}^{\otimes m}\ket{\xx} = \displaystyle \frac{1}{\sqrt{2^m}}\sum_{j=0}^{2^m-1}\ket{j}\ket{\xx}$
    \Comment{Apply Hadamard gates to the~$m$ ancillas register}
    \newline
    \State \textbf{3}. $\longrightarrow \displaystyle \frac{1}{\sqrt{2^m}}\sum_{j=0}^{2^m-1}
    \exp\left\{2\I\pi j \frac{\xxw^{\top} \ww}{2^m}\right\}
    \ket{j}\ket{\xx}$
    \Comment{Apply $\Ug_{\ww,m}$}
    \newline
    \State \textbf{4}. $\longrightarrow \ket{\xxw^{\top} \ww}\ket{\xx}$ 
    \Comment{Apply the inverse $\QFT$ where $\ket{\xxw^{\top} \ww}$ is a $p$-qubit approximation of $\xxw^{\top}\ww$}
    \newline
    \State \textbf{5}. $\longrightarrow \ket{\xxw^{\top} \ww}$ \Comment{Measure $\xxw^{\top} \ww$ with a probability at least $1-\eps$}
    \newline
    \Return \textbf{Output:} $\xxw^{\top} \ww$
\end{algorithmic}
\label{alg:QInnerProd}
\end{algorithm}

\begin{example}\label{ex:Innerprod2}
To understand the computations performed by the quantum gates,
consider the case where $n=p=2$. 
Therefore we only need $2\times2$ qubits to represent each element of the dataset which constitute the main register. 
Introduce an ancilla register  composed of $m=2$ qubits each initialised at~$\ket{0}$,
and suppose that the input state on the main register is $\ket{\xx}$. 
The goal here is then to encode as a phase the result of the inner product $\xx^{\top} \ww$ where $\ww=(w_0,w_1)^{\top}$. 
So in this exemple the entire wave function combining both the main register's qubits and the ancilla register's qubits is encoded in 6 qubits.
By denoting $\CR_z^{1,j,k}(\alpha)$ the $\CR_z(\alpha)$ matrix applied to the first qubit of the ancilla register and the qubit $\ket{x^{i}_{j,k}}$,
and $\CR_z^{2,j,k}(\alpha)$ the $\CR_z(\alpha)$ matrix applied to the second qubit of the ancilla register and the qubit $\ket{x_{j,k}}$.
Using the gates in~\eqref{eq:GatesUwm}, namely
$$
\Ug_{\ww,1} =
\prod_{j=0}^{1}\prod_{k=1}^{2} \CR_z^{1,j,k}\left(\frac{w_j}{2^{1+k}}\right) 
\quad\text{and}\quad
\Ug_{\ww,2} =
\left\{\prod_{j=0}^{1}\prod_{k=1}^{2} \CR_z^{2,j,k}\left(\frac{w_j}{2^{2+k}}\right)\right\}^2 \prod_{j=0}^{1}\prod_{k=1}^{2} \CR_z^{1,j,k}\left(\frac{w_j}{2^{2+k}}\right).$$

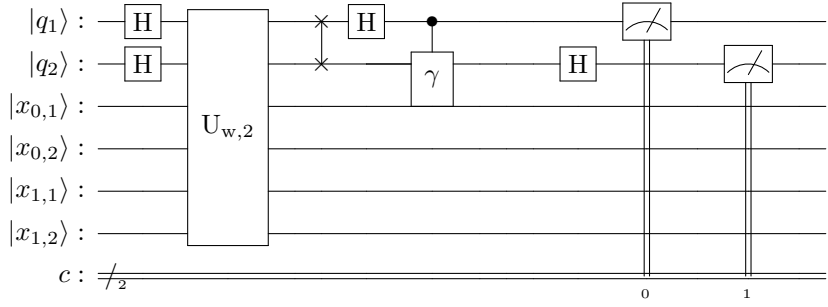
\begin{figure}
    \centering
    \begin{equation*}
    \Qcircuit @C=1.0em @R=0.2em @!R {
	 	\lstick{ \ket{{q}_{1}} :  } & \gate{\mathrm{H}} & \multigate{5}{\mathrm{U_{w,2}}} & \qw & \qswap & \gate{\mathrm{H}} & \ctrl{1} &  \qw & \qw & \qw & \qw & \meter & \qw & \qw & \qw & \qw\\
	 	\lstick{ \ket{{q}_{2}} :  } & \gate{\mathrm{H}} & \ghost{\mathrm{U_{w,2}}} & \qw & \qswap \qwx[-1]  & \qw & \multigate{1}{\mathrm{\gamma}} \qw & \qw & \qw & \qw & \gate{\mathrm{H}} & \qw & \qw & \meter & \qw & \qw \\
	 	\lstick{ \ket{{x}_{0,1}} :  } & \qw & \ghost{\mathrm{U_{w,2}}}  & \qw & \qw & \qw & \qw & \qw & \qw & \qw & \qw & \qw & \qw & \qw & \qw & \qw\\
	 	\lstick{ \ket{{x}_{0,2}} :  } & \qw & \ghost{\mathrm{U_{w,2}}} & \qw & \qw & \qw & \qw & \qw & \qw & \qw & \qw & \qw & \qw & \qw & \qw & \qw\\
	 	\lstick{ \ket{{x}_{1,1}} :  } & \qw & \ghost{\mathrm{U_{w,2}}} & \qw & \qw & \qw & \qw & \qw & \qw & \qw & \qw & \qw & \qw & \qw & \qw & \qw\\
	 	\lstick{ \ket{{x}_{1,2}} :  } & \qw & \ghost{\mathrm{U_{w,2}}} & \qw & \qw & \qw & \qw & \qw & \qw & \qw & \qw & \qw & \qw & \qw & \qw & \qw\\
	 	\lstick{c:} & \lstick{/_{_{2}}} \cw & \cw & \cw & \cw & \cw & \cw & \cw & \cw & \cw & \cw & \dstick{_{_{0}}} \cw \cwx[-6] & \cw & \dstick{_{_{1}}} \cw \cwx[-5] & \cw & \cw\\
	 }
\end{equation*}
    \caption{QIP circuit for $m=2$ ancilla qubits. The~$c$ line represents the classical register from which we retrieve the outcomes of the measurements. The controlled gate $\gamma$  performs as $ C(\gamma): \ket{q_1}\ket{q_2}\mapsto\ind_{\ket{q_1}=\ket{1}}(\ket{q_1})\ket{1}\otimes \E^{-\I\frac{\pi}{4}}\ket{q_2}+\ind_{\ket{q_1}=\ket{0}}(\ket{q_1})\ket{0}\otimes\ket{q_2}$}
\label{fig:QInnerProduct}
\end{figure}
\end{example}

\begin{remark}
There is an interesting and potentially very useful difference here between the quantum and the classical versions of a feedforward neural network;
in the former, the input~$\xx$ is not lost after running the circuit, while this information is lost in the classical setting. 
This in particular implies that it can be used again for free in the quantum setting.
\end{remark}

\subsubsection{The case $\xx^{\top}\ww \notin \{0,\ldots, 2^m-1\}$.}

What happens if $\xxw^{\top} \ww$ is not a integer and $\xxw^{\top} \ww\geq 0$? 
Again, the short answer is that we are able to obtaina good approximation of~$\xxw^{\top} \ww$, 
which is already an approximation of the true value of the inner product~$\xx^{\top}\ww$. 
Indeed, with the gates constructed above, 
QIP performs exactly like QPE.
Just a quick comparison between what is obtained at stage 3 of the QPE Algorithm (Algorithm~\ref{algo:QPE}) and the output obtained at the third stage of the QIP~\eqref{comparison} would be enough to state that the QIP is just an application of the QPE procedure. 
Thus $\left\{\prod_{j=0}^{n-1}\prod_{k=1}^{p}\CR_z^{1,j,k}\left(\frac{w_j}{2^{m+k}}\right)\right\}$ is a unitary matrix such that $\ket{1}\otimes\ket{{\xx}}$ is an eigenvector of eigenvalue 
$\exp\left\{2\I\pi \frac{\xxw^{\top} \ww}{2^m}\right\}$.

Let $\phi:=\frac{1}{2^m}\xxw^{\top} \ww$;
the QPE procedure (Appendix~\ref{sec:QPE}) can only estimate $\phi \in [0,1)$. 
Firstly $\phi \leq 0$ can happen and secondly $\lvert \phi \rvert \geq 1$
can also happen. Therefore such circumstances have to be addressed. One first step would be to have
$\ww \in [-1,1]^n$,
so that $\lvert \xxw^{\top} \ww \rvert \leq n$. 
Then one should have $m$ (the number of ancillas) large enough so that
\begin{equation}\label{inequality param}
    \left| \frac{\xxw^{\top} \ww }{2^m}\right| \leq 1,
\end{equation}
which produces $m\geq \log_2(n)$. Having these constrains respected, 
one obtains
$|\phi|\leq 1$,
which is not enough since we should  have  $\phi \in [0,1)$ instead.
The main idea behind solving that is based on computing $\frac{\xxw^{\top} \ww }{2}$ instead of $\xxw^{\top} \ww$ which means dividing by 2 all the parameters of the $\CR_z^{m,j,k}$ gates.
Indeed with~\eqref{inequality param}, we have 
$-2^m \leq \xxw^{\top} \ww \leq 2^m$,
and thus
$-2^{m-1} \leq \half \xxw^{\top} \ww \leq 2^{m-1}$. 
\begin{itemize}
    \item In the case where $\xxw^{\top} \ww\geq 0$ we have
    $\frac{\xxw^{\top} \ww}{2} \in [0,2^{m-1}]$ and then by defining 
    $\widetilde{\phi}^{+}:=\frac{1}{2^m}\frac{\xxw^{\top} \ww}{2}$ 
we then obtain $\widetilde{\phi}^{+} \in [0,\frac{1}{2}]$, therefore the QPE can produce an approximation of $\widetilde{\phi}^{+}$ as put forward in Algorithm~\ref{algo:QPE}  which then can be multiplied by $2^{m+1}$ to retrieve $\xxw^{\top}\ww$ .
    \item In the case where $\xxw^{\top} \ww \leq 0$, 
    then $\frac{\xxw^{\top} \ww}{2} \in [-2^{m-1},0]$. 
    As above, $\ket{1}\otimes\ket{\xx}$ is an eigenvector of $\left\{\prod_{j=0}^{n-1}\prod_{k=1}^{p}\CR_z^{1,j,k}\left(\frac{\frac{w_j}{2}}{2^{m+k}}\right)\right\}$ with corresponding eigenvalue $\exp\left\{2\I\pi \frac{\frac{\xxw^{\top} \ww}{2}}{2^m}\right\}= \exp\left\{2\I\pi\left[1+ \frac{\frac{\xxw^{\top} \ww}{2}}{2^m}\right]\right\}$. 
   Defining 
   $\widetilde{\phi}^{-}
   := \frac{1}{2^m}\left(2^m+ \frac{\xxw^{\top} \ww}{2}\right)
   = 1+\frac{1}{2^m}\frac{\xxw^{\top} \ww}{2}$ we then obtain $\widetilde{\phi}^{-} \in [\half,1]$ which a QPE procedure can estimate 
   and from which we can retrieve $\xxw^{\top} \ww$
\end{itemize}
For values of~$\phi$ measured in $[0,\half) \cup (\half,1)$ we are sure about the associated value of the inner product. This means that for a fixed~$\xx$, the map
$$
f: \Big[0,\half\Big) \cup \left(\half,1\right)\ni \phi \mapsto \xxw^{\top} \ww \in [-n,n]
$$
is injective.
A measurement output equal to~$half$ could mean 
either that
$\xxw^{\top} \ww=2^m$ or $\xxw^{\top} \ww=-2^m$, 
which could be prevented for $\ww\in [-1,1]^n$ and~$m$ large enough such that $n < 2^m$. 
Under these circumstances, $f$ can be extended to an injective function on $[0,1)$, 
with~$1$ being excluded since the QPE can only estimate values in $[0,1)$.

\subsection{Quantum activation function}\label{sec:QActivation}
We consider an activation function~$\sigma:\RR\to\RR$.
A classical example is the sigmoid
$\sigma(x):=\left(1+\E^{-x}\right)^{-1}$.
The goal here is to build a circuit performing the transformation $\ket{x}\mapsto\ket{\sigma(x)}$ where $\ket{x}$ and $\ket{\sigma(x)}$ are the quantum encoded versions of their classical counterparts as in Section~\ref{sec:QEncoding}.
Again, we shall appeal to the Quantum Phase Estimation algorithm.
For a $q$-qubit state $\ket{x}=\ket{x_1\ldots x_q} \in \CC^{2^q}$, 
we wish to build a matrix~$\Ug \in \Mm_{2^q}(\CC)$ such that
$$
\Ug\ket{x}=\E^{2\I\pi \sigma(x)}\ket{x}.$$
Considering 
$$
\Ug := \text{Diag}\left(\E^{2\I\pi \sigma(0)},\E^{2\I\pi \sigma(1)},\E^{2\I\pi \sigma(2)},\ldots,\E^{2\I\pi \sigma(2^q-1)}\right),
$$
then, for~$m$ ancilla qubits, the Quantum Phase estimation yields
$$
\QPE: \ket{0}^{\otimes m}\otimes\ket{x}
\mapsto\ket{\widetilde{\sigma(x)}}\otimes\ket{x},
$$
where again $\widetilde{\sigma(x)}$ is the $m$-bit binary fraction approximation for~$\sigma(x)$ as detailed in Algorithm~\ref{algo:QPE}.
In Figure~\ref{fig:qNeuron}, we can see that the information flows from $\ket{\xx}=\ket{x_{0,1}x_{1,1}x_{2,1}x_{3,1}}$ to  the register attached to $\ket{q_2}$ to obtain the inner product and from the register~$\ket{q_2}$ to~$\ket{q_1}$ for the activation of the inner product. 
This explains why only measuring the register~$\ket{q_1}$ is enough to retrieve $\sigma(\xxw^{\top} \ww)$.

\begin{figure}[!h]
    \centering
\begin{equation*}
    \Qcircuit @C=1.5em @R=0.7em @!R {
	 	\lstick{ \ket{{q}_{1}}:  } & \gate{\mathrm{H}} & \qw & \qw & \multigate{1}{\mathrm{\sigma}} & \gate{\mathrm{H}} & \meter & \qw & \qw\\
	 	\lstick{ \ket{{q}_{2}}:  } & \gate{\mathrm{H}} & \multigate{4}{\mathrm{\Ug_{\ww,1}}} & \gate{\mathrm{H}} & \ghost{\mathrm{\sigma}} & \qw & \qw & \qw & \qw\\
	 	\lstick{ \ket{{x}_{0,1}}:  } & \qw & \ghost{\mathrm{\Ug_{\ww,1}}} & \qw & \qw & \qw & \qw & \qw & \qw\\
	 	\lstick{ \ket{{x}_{1,1}}:  } & \qw & \ghost{\mathrm{\Ug_{\ww,1}}} & \qw & \qw & \qw & \qw & \qw & \qw\\
	 	\lstick{ \ket{{x}_{2,1}}:  } & \qw & \ghost{\mathrm{\Ug_{\ww,1}}} & \qw & \qw & \qw & \qw & \qw & \qw\\
	 	\lstick{ \ket{{x}_{3,1}}:  } & \qw & \ghost{\mathrm{\Ug_{\ww,1}}} & \qw & \qw & \qw & \qw & \qw & \qw\\
	 	\lstick{c:} & \lstick{/_{_{1}}} \cw & \cw & \cw & \cw & \cw & \dstick{_{_{0}}} \cw \cwx[-6] & \cw & \cw\\}
\end{equation*}
    \caption{Quantum single neuron for $\ket{\xx} \in \mathbb{C}^{2^4},$ one ancilla qubit $\ket{q_2}$ for the QIP implemented via the controlled gate $\Ug_{\ww,1}$ for $w \in [-1,1]^4$, and one ancilla qubit $\ket{q_1}$ for the activation function~$\sigma$.}
    
    \label{fig:qNeuron}
\end{figure}
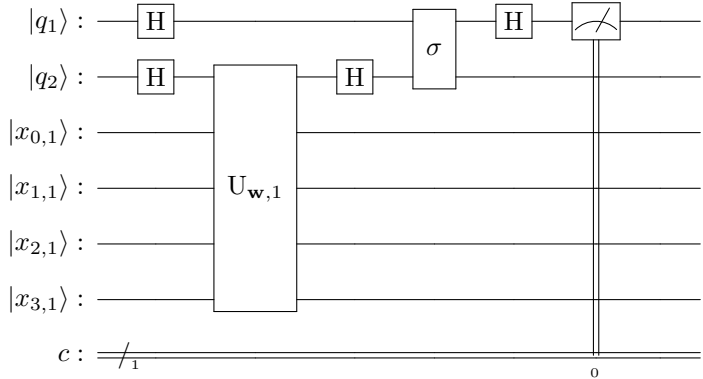

\section{Quantum GAN architecture}\label{sec:QuGAN}

A Generative Adversarial Network (GAN) is a network composed of two neural networks.
In a classical setting, two agents, the generator and the discriminator, compete against each other in a zero-sum game~\cite{GameTheory}, 
playing in turns to improve their own strategy;
the generator tries to fool the discriminator 
while the latter aims at correctly distinguishing real data (from a training database) from generated ones. 
As put forward in~\cite{goodfellow2014generative}, 
the generative model can be thought of as an analogue to a team of counterfeiters,
trying to produce fake currency and use it without detection, while the discriminator plays the role of the police, trying to detect the counterfeit currency. Competition in this game drives
both teams to improve their methods until the counterfeits are indistinguishable from the genuine
articles. 
Under reasonable assumptions 
(the strategy spaces of the agents are compact and convex)
the game has a unique (Nash) equilibrium point, 
where the generator is able to reproduce exactly the target data distribution.
Therefore, in a classical setting, the generator~$\Gf$, parameterised by a vector of parameters~$\thbg$, produces a random variable~$X_{\thbg}$, which we can write as the map
$$
\Gf: \thbg \xrightarrow[]{} X_{\thbg}.
$$
The goal of the discriminator~$\Df$, parameterised by~$\thbd$, is to distinguish samples~$\xx_{\thbg}$ of~$X_{\thbg}$ from $\xx_{\real} \in \Dd$, 
where $\xx_{\real}$ has been sampled from the underlying distribution $\PP_{\Dd}$ of the database~$\Dd$. 
The map $\Df$ thus reads
$$
\Df: \xx_{\thbg},\thbd \mapsto \PP_{\thbd}\left(\xx_{\thbg} \text{ sampled from }  \PP_{\Dd}\right).
$$
We aim here at mimicking this classical GAN architecture into quantum version.
Not surprisingly, we first build a quantum discriminator, followed by a quantum version of the generator, 
and we finally develop the quantum equivalent of the zero-sum game, defining an objective loss function acting on quantum states.

\subsection{Quantum discriminator}

In the case of a fully connected quantum GAN - which we study here - where both the discriminator and generator are quantum circuits, one of the main differences between a classical GAN and a QuGAN lays in the input of the discriminator.
Indeed, as said above, in a classical discriminator the input is a sample $\xx_{\thbg}$ generated by the generator~$\Gf$, whereas in a quantum discriminator the input is a wave function 
\begin{equation}\label{eq:WaveFunction}
\ket{v_{\thbg}}=\sum_{j=0}^{2^n-1}v_{j,\thbg}\ket{j}
\end{equation}
generated by a quantum generator. 
In such a setting, the goal is to create a wave function of the form~\eqref{eq:WaveFunction}
which is a physical way of encoding a given discrete distribution, namely
\begin{equation} \label{eq:proba 2}
\PP\left(\ket{v_{\thbg}}=\ket{j}\right) = |v_{j,\thbg}|^2=p_j,
\qquad\text{fo each } j=0,\ldots, 2^n-1,
\end{equation}
where $(p_{j})_{j=0,\ldots, 2^n-1} \in [0,1]^{2^n}$ with $\sum_{j=0}^{2^n-1}p_{j}=1$.
We choose here a simple architecture for the discriminator, as a quantum version 
of a perceptron with a sigmoid activation function (Figure~\ref{fig:ClassicalPerceptron}).

\begin{figure}[h!]\label{fig:Perceptron}
    \centering
    \tikzset{basic/.style={draw,fill=none,
                       text badly centered,minimum width=3em}}
\tikzset{input/.style={basic,circle,minimum width=2.5em}}
\tikzset{weights/.style={basic,rectangle,minimum width=2em}}
\tikzset{functions/.style={basic,circle, minimum width=3.5em}}
\newcommand{\addaxes}{\draw (0em,1em) -- (0em,-1em)
                            (-1em,0em) -- (1em,0em);}
\newcommand{\sigm}{\draw[line width=1.5pt] (-1em,0) -- (0,0)
                                (0,0) -- (0.75em,0.75em);}
\newcommand{\stepfunc}{\draw[line width=1.5pt] (0.65em,0.65em) -- (0,0.65em) 
                                    -- (0,-0.65em) -- (-0.65em,-0.65em);}

    \begin{tikzpicture}[scale=0.8]
    \foreach \h [count=\hi ] in {$x_2$,$x_1$,$x_0$}{%
          \node[input] (f\hi) at (0,\hi*1.25cm-1.5 cm) {\h};
        }
    \node[below=0.62cm] (idots) at (f1) {\vdots};
    \node[input, below=0.62cm] (last_input) at (idots) {$x_{n-1}$};
    \node[functions] (sum) at (5,0) {$\xx^\top\ww$};
    \foreach \h [count=\hi ] in {$w_2$,$w_1$,$w_0$}{%
          \path (f\hi) -- node[weights] (w\hi) {\h} (sum);
          \draw[->] (f\hi) -- (w\hi);
          \draw[->] (w\hi) -- (sum);
        }
    \node[below=0.05cm] (wdots) at (w1) {\vdots};
    \node[weights, below=0.45cm] (last_weight) at (wdots) {$w_{n-1}$};
    \path[draw,->] (last_input) -- (last_weight);
    \path[draw,->] (last_weight) -- (sum);
    \node[functions] (activation) at (8,0) {$\huge\sigma$};
    \draw[->] (sum) -- (activation);
    \draw[->] (activation) -- ++(2,0);
    \node at (0,4) {inputs};
    \node at (3,4) {weights};
    \node at (8,4) {activation function};
    \end{tikzpicture}

    \caption{Classical perceptron mapping $\xx\in\RR^{n}$ to $\sigma\left(\xx^\top\ww\right) \in \RR$.}
    \label{fig:ClassicalPerceptron}
\end{figure}
This approach of building the circuit is new since in the papers that use  quantum discriminators, the circuits that are used are what is called ansatz circuits~\cite{HowToEnhance}, 
in other words generic circuits built with layers of rotation gates and controlled rotation gates (see~\eqref{eq:RYGate} and~\eqref{eq:CRYGate} below for the definition of these gates). 
Such ansatz circuits are therefore parameterised circuits as put forward in~\cite{chakrabarti2019quantum}, where generally an interpretation on the circuit's architecture performing as a classifying neural network cannot be made. As pointed out in \cite{HowToEnhance}, the architectures of both the generator and the discriminator are the same, which on the one hand solves the issue of having to monitor whether there is a imbalance in terms of expressivity between the generator and the discriminator, however on the other hand it prevents us from being able to give a straightforward interpretation for the given architectures. 

The main task here is then to translate these classical computations to a quantum input for the discriminator. 
This challenge has been taken up in both~\ref{sec:QInnerProd} and~\ref{sec:QActivation}  where we have built from scratch a quantum perceptron which performs exactly like a classical perceptron. 
There is however one main difference in terms of interpretation:
let the wave function~\eqref{eq:WaveFunction} be the input for the discriminator with $N=2^n$ and, 
for $j = \overline{j_1\cdots j_n}$ (defined in~\eqref{eq:binRepr}), 
define $\phi_{j}:=(j_1,\ldots,j_n)$.
Denote $\Dr(\ww) \in \Mm_{2^{n+m_{1}+m_{2}}}(\CC)$ the transformation performed by the entire quantum circuit depicted in Figure~\ref{fig:QuantumPerceptron}, where $\Dr(\ww)$ is unitary and 
$\ww\in \RR^n$, namely for $m_{1}+m_{2}$ ancilla qubits, 
$$
\Dr(\ww)\ket{0}^{\otimes{m_{1}+m_{2}}}\ket{j}
 = \ket{\sigma\left(\phi_{j}^{\top} \ww\right)}\ket{\phi_{j}^{\top} \ww)}\ket{j},
$$
where $\ket{\sigma\left(\phi_{j}^{\top} \ww\right)} \in \CC^{2^{m_{1}}}$ 
and $\ket{\phi_{j}^{\top} \ww} \in \CC^{2^{m_{2}}}$
and where we only measure $\ket{\sigma\left(\phi_{j}^{\top} \ww\right)}$.
Thus, for the input~\eqref{eq:WaveFunction}, 
the discriminator outputs the wave function (with $m_{1}+m_{2}$ ancilla qubits)
\begin{equation}\label{Output Discrim}
\Dr(\ww)\ket{0}^{\otimes{m_{1}+m_{2}}}\ket{v_{\thbg}}
 = \sum_{j=0}^{2^n-1}v_{j,\thbg}\ket{\sigma\left(\phi_{j}^{\top} \ww\right)}\ket{\phi_{j}^{\top} \ww)}\ket{j}.
\end{equation}
Therefore, in a QuGAN setting the goal for the discriminator is to distinguish the target wave function $\ket{\psi_{\tar}}$ from the generated one $\ket{v_{\thbg}}$.

\begin{example}\label{ex:Discrimin}
As an example, consider $m_{2}=1$ ancilla qubit for the inner product, $m_{1}=1$ ancilla qubit for the activation, 
$\ket{\psi_{\tar}}=\psi_0\ket{0}+\psi_1\ket{1}$ and $\ket{v_{\thbg}}=v_{0,\thbg}\ket{0}+v_{1,\thbg}\ket{1}$.
As we only measure the outcome produced by the activation function, 
the only possible outcomes are~$\ket{0}$ and~$\ket{1}$. 
Therefore, measuring the output of the discriminator only consists of a projection on either $\ket{0}$ or $\ket{1}$. 
Define these projectors
$$
\Pi_0 := \ket{0}\bra{0}\otimes \Id^{\otimes m_{2}+n} \in \mathcal{M}_{2^{m_{1}+n+m_{2}}}(\mathbb{C})
\qquad\text{and}\qquad
\Pi_1 := \ket{1}\bra{1}\otimes \Id^{\otimes m_{2}+n} \in \mathcal{M}_{2^{m_{1}+n+m_{2}}}(\mathbb{C}),
$$ 
where $m_{2}=1$ and $n=1$ since in our toy example the wave functions encoding the distributions are 
$1$-qubit distributions. 
Interpreting measuring~$\ket{0}$ as labelling the input distribution \textit{Fake} 
and measuring~$\ket{1}$ as labelling it \textit{Real}, 
the optimal discriminator with parameter~$\ww^*$ would perform as 
\begin{equation}\label{eq:ProbaProj}
\begin{array}{rclr}
\displaystyle \PP\left( \Dr(\ww^*)\ket{0}^{\otimes{m_{1}+m_{2}}}\ket{v_{\thbg}}
 =\ket{0}\otimes\sum_{j=0}^{2^n-1}v_{j,\thbg}\ket{\phi_{j}^{\top} \ww^*}\ket{j}\right)
 & = & \displaystyle \left\|\Pi_0\Dr(\ww^*)\ket{0}^{\otimes{m_{1}+m_{2}}}\ket{v_{\thbg}}\right\|^2
& =1,\\
\displaystyle \PP\left(\Dr(\ww^*)\ket{0}^{\otimes{m_{1}+m_{2}}}\ket{\psi_{\tar}}
=\ket{1}\otimes\sum_{j=0}^{2^n-1}\psi_{j}\ket{\phi_{j}^{\top} \ww^*}\ket{j}\right)
 & = & \displaystyle \left\|\Pi_1\Dr(\ww^*)\ket{0}^{\otimes{m_{1}+m_{2}}}\ket{\psi_{\tar}}\right\|^2
& =1,
\end{array}
\end{equation}
where still in our toy example we have $n=1$, $m_{1}=1$ and $m_{2}=1$. Here $n$ could be any positive integer.
We illustrate the circuit in Figure~\ref{fig:QuantumPerceptron}.

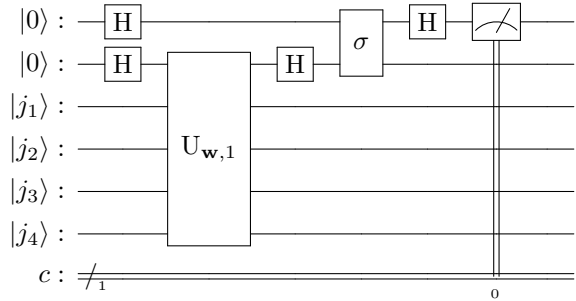
\begin{figure}[h!]\label{QDiscriminator}
    \centering
    \begin{equation*}
    \Qcircuit @C=1.0em @R=0.2em @!R {
	 	\lstick{ \ket{0}:  } & \gate{\mathrm{H}} & \qw & \qw & \multigate{1}{\mathrm{\sigma}} & \gate{\mathrm{H}} & \meter & \qw & \qw\\
	 	\lstick{ \ket{0}:  } & \gate{\mathrm{H}} & \multigate{4}{\mathrm{\Ug_{\ww,1}}} & \gate{\mathrm{H}} & \ghost{\mathrm{\sigma}} & \qw & \qw & \qw & \qw\\
	 	\lstick{ \ket{{j}_{1}}:  } & \qw & \ghost{\mathrm{\Ug_{\ww,1}}} & \qw & \qw & \qw & \qw & \qw & \qw\\
	 	\lstick{ \ket{{j}_{2}}:  } & \qw & \ghost{\mathrm{\Ug_{\ww,1}}} & \qw & \qw & \qw & \qw & \qw & \qw\\
	 	\lstick{ \ket{{j}_{3}}:  } & \qw & \ghost{\mathrm{\Ug_{\ww,1}}} & \qw & \qw & \qw & \qw & \qw & \qw\\
	 	\lstick{ \ket{{j}_{4}}:  } & \qw & \ghost{\mathrm{\Ug_{\ww,1}}} & \qw & \qw & \qw & \qw & \qw & \qw\\
	 	\lstick{c:} & \lstick{/_{_{1}}} \cw & \cw & \cw & \cw & \cw & \dstick{_{_{0}}} \cw \cwx[-6] & \cw & \cw\\
	 }
\end{equation*}
    \caption{Quantum perceptron with $\ww \in \RR^4$ and one ancilla qubit 
    for the inner product ($m_{2}=1$) and one ancilla qubit for the activation ($m_{1}=1$). 
    Here we only measure the result produced by the activation function.}
\label{fig:QuantumPerceptron}
\end{figure}
\end{example}

\subsubsection{Bloch sphere representation}
The Bloch sphere~\cite{nielsen00} is an important in Quantum Computing, 
providing a geometrical representation of pure states. 
In our case, it yields a geometric visualisation of the way an optimal quantum discriminator works
as it separates the two complementary regions 
\begin{equation}\label{eq:Regions}
\begin{array}{rl}
\Rr_F & := \displaystyle \left\{\sum_{i=0}^{2^{m-1}
-1}\alpha_i\ket{i} \text{ such that } \sum_{i=0}^{2^{m-1}
-1}|\alpha_i|^2=1\right\},\\
\Rr_T & := \displaystyle \left\{\sum_{i=2^{m-1}}^{2^{m}
-1}\alpha_i\ket{i} \text{ such that } \sum_{i=2^{m-1}}^{2^{m}
-1}|\alpha_i|^2=1\right\},
\end{array}
\end{equation}
where $m:=m_{1}+m_{2}+n$ is the total number of qubits for the inputs of the discriminator. 
The optimal discriminator $\Dr(\ww^*)$ would perform as 
$$
\Dr(\ww^*)\ket{\fake} \in \Rr_F
\quad\text{and}\quad
\Dr(\ww^*)\ket{\real} \in \Rr_T, 
\quad\text{almost surely},
$$
where $\ket{\fake}:=\ket{0}\ket{0}\ket{v_{\thbg}}$ and
$\ket{\real}:=\ket{0}\ket{0}\ket{\psi_{\tar}}$. 
Now, the challenge lays in finding such an optimal discriminator, however one should note that the nature of the state $\ket{\fake}$ plays a major role in finding such a discriminator.
Therefore, in the following part we focus on the generator responsible for generating $\ket{\fake}$.

\begin{example}
Consider Example~\ref{ex:Discrimin} with 
$(\psi_0, \psi_1) = (\frac{1}{\sqrt{2}}, \frac{1}{\sqrt{2}})$ 
and $(v_{0,\thbg}, v_{1,\thbg}) = (\frac{\sqrt{3}}{2}, \frac{1}{2})$.
The states~$\ket{\psi_{\tar}}$ and~$\ket{v_{\thbg}}$ are shown in Figure~\ref{fig: Bloch Sphere}.
The wave function produced by the discriminator is composed of three qubits 
($m_{1}=1$, $m_{2}=1$ and $n=1$ qubit for the input wave function~\eqref{Output Discrim}), 
therefore one optimal transformation for the discriminator having $\ket{\psi_{\tar}}$ as an input is one such that the first qubit never collapses onto the state~$\ket{0}$.
\begin{figure}[h!]
    \centering
    \includegraphics[width=5cm]{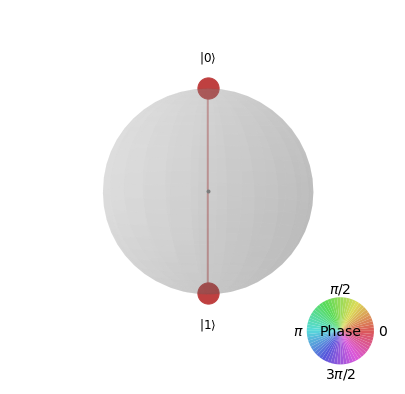}
    \qquad
    \includegraphics[width=5cm]{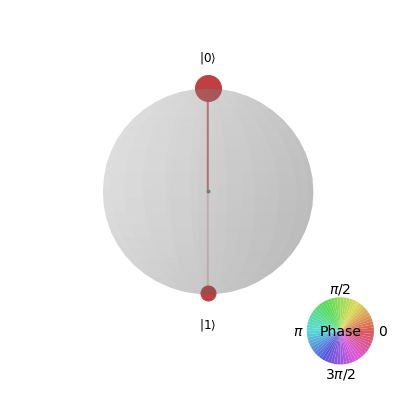}
    \caption{Bloch spheres representations for $\ket{\psi_{\tar}}$ (left) and $\ket{v_{\thbg}}$ (right) where there is no phase shift between~$\ket{0}$ and~$\ket{1}$
    and where the sizes of the lobes are proportional to the probability of measuring the associated states.}
    \label{fig: Bloch Sphere}
\end{figure}
\end{example}

\begin{figure}[h!]
    \centering
    \includegraphics[width=4.5cm]{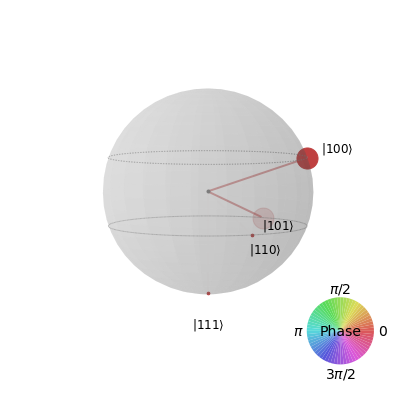}
    \centering
    \includegraphics[width=4.5cm]{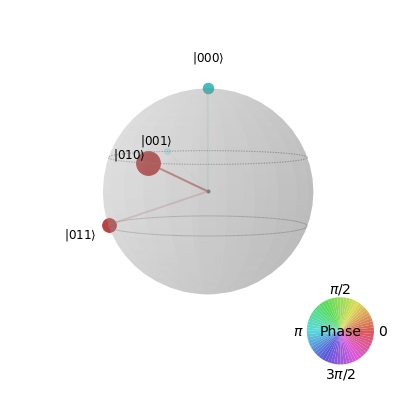}
    \caption{Left: $\Dr(w^*_1)\ket{0}\ket{0}\ket{\psi_{\tar}}$.
    Total system post one optimal discriminator transformation.
    As we can see here the first qubit never collapses onto $\ket{0}$ and therefore such a discriminator is optimal at labelling $\ket{\psi_{\tar}}$ as \textit{Real}.
    Right: $\Dr(w^*_2)\ket{0}\ket{0}\ket{v_{\thbg}}$.
    Total system post one optimal discriminator transformation.
    As we can see here the first qubit never collapses onto $\ket{1}$ and therefore such a discriminator is optimal at labelling $\ket{v_{\thbg}}$ \textit{Fake}.}
\end{figure}

\subsection{Quantum generator}

The quantum generator is a quantum circuit producing a wave function that encodes a discrete distribution. 
Such a circuit takes as an input the ground state $\ket{0}\otimes^{n-m_{1}-m_{2}}$
and outputs a wave function $\ket{v_{\thbg}}$ parameterised by $\thbg$, the set of parameters for the discriminator. 
We recall here a few quantum gates that will be key to constructing a quantum generator.
Recall that a quantum gate can be viewed as a unitary matrix;
of particular interest will be gates acting on two (or more) qubits, as its allows quantum entanglement,
thus fully leveraging the power of quantum computing.
The $\textrm{NOT}$ gate~$\Xg$ acts on one qubit and is represented as
$$
\Xg = \begin{pmatrix}
0 & 1 \\
1 & 0 
\end{pmatrix},
$$
so that $\Xg\ket{0} = \ket{1}$ and $\Xg\ket{1} = \ket{0}$.
The $\RY$ is a one-qubit gate represented by the matrix
\begin{equation}\label{eq:RYGate}
\RY(\theta)
 := \begin{pmatrix}
\cos\left(\frac{\theta}{2}\right) & -\sin\left(\frac{\theta}{2}\right)\\
\sin\left(\frac{\theta}{2}\right) & \cos\left(\frac{\theta}{2}\right) 
\end{pmatrix},
\end{equation}
thus performing as
$$
\RY(\theta)\ket{0} = \cos\left(\frac{\theta}{2}\right)\ket{0} + \sin\left(\frac{\theta}{2}\right)\ket{1}
\qquad\text{and}\qquad
\RY(\theta)\ket{1} = \cos\left(\frac{\theta}{2}\right)\ket{1} - \sin\left(\frac{\theta}{2}\right)\ket{0}.
$$

The $\CRY$ Gate is the controlled version of the~$\RY$ gate, acting on two qubits, one control qubit and one transformed qubit, producing quantum entanglement. 
The~$\RY$ transformation applies on the second qubit only when provided the control qubit is in~$\ket{1}$, 
otherwise leaves the second qubit unaltered.
Its matrix representation is
\begin{equation}\label{eq:CRYGate}
\CRY(\theta)=\begin{pmatrix}
1 & 0 & 0 & 0  \\
0 & 1 & 0 & 0 \\
0 & 0 & \cos\left(\frac{\theta}{2}\right) & -\sin\left(\frac{\theta}{2}\right)\\
0 & 0 & \sin\left(\frac{\theta}{2}\right) & \cos\left(\frac{\theta}{2}\right).
\end{pmatrix}
\end{equation}

Given~$n$ qubits
let $\XX:=(X_1\ldots X_n)$ be a random vector taking values in $\Xx_n := \{0, 1\}^n$. 
Set $$
p_{\xx} := \mathbb{P}[\XX = \xx], \quad\text{for } \xx\in \Xx_n.
$$
When building the generator we are looking for a quantum circuit that implements the transformation
$$
\ket{0}^{\otimes n}\mapsto\sum_{\xx\in \{0,1\}^n}\sqrt{p_{\xx}}\E^{{\I\theta_{\xx}}}\ket{\xx}.
$$

We could follow a classical algorithm. 
For $1 \leq k \leq n$, let $\xx_{:k}:=(x_1,\ldots, x_k)$ and,
given $\xx\in\Xx_n$, 
\begin{equation}
q_{\xx_{:k}} := 
\left\{
\begin{array}{ll}
\mathbb{P}[X_1 = 0], & \text{if }k=1,\\
\mathbb{P}[X_k = 0|\XX_{:k-1} = \xx_{:k-1}], & \text{if }2\leq k\leq n.
\end{array}
\right.
\end{equation}
We then proceed by induction: 
start with a random draw of~$\XX_1$ as a Bernoulli sample with failure probability~$q_{\xx_1}$.
Assuming that~$\XX_{:k-1}$ has been sampled as~$\xx_{:k-1}$ for some $1 \leq k \leq n$, 
sample~$\XX_k$ from a Bernoulli distribution with failure probability $q_{\xx_{:k-1}}$. 
The quantum circuit will equivalently consist of~$n$ stages, 
where at each stage $1 \leq k \leq n$ we only work with the first~$k$ qubits, 
and at the end of each stage there is the correct distribution for the first~$k$ qubits 
in the sense that, upon measuring, their distribution coincides with that of~$\XX_{:k}$.

The first step is simple: a single $\Yg$-rotation of the first qubit with angle 
$\theta \in [0, \pi]$ satisfying $\cos(\frac{\theta}{2}) = \sqrt{q_{\xx_1}}$. 
In other words, with $\Ug_{1} := \RY(\theta)$, 
we map $\ket{0}$ to $\Ug_{1}\ket{0} = \sqrt{q_{\xx_1}}\ket{0} + \sqrt{1-q_{\xx_1}}\ket{1}.$ 
Clearly, when measuring the first qubit, we obtain the correct law.
Now, inductively, for $2 \leq k \leq n$, 
suppose the first~$k-1$ qubits fixed, namely in the state
$$
\sum_{\xx_{:k-1}\in\Xx_{k-1}}\sqrt{p_{\xx_{:k-1}}}\ket{\xx_{:k-1}}\ket{0}^{\otimes n-k+1},
$$
For each $\xx_{:k-1}\in \Xx_{k-1}$, let $\theta_{\xx_{:k-1}}\in[0;\pi]$ satisfy $\cos\left(\half \theta_{\xx_{:k-1}}\right)=\sqrt{q_{\xx_{:k-1}}}$ 
and consider the gate $\Cg_{\xx_{:k-1}}$ acting on the first~$k$ qubits 
which is a $\RY(\theta_x)$ on the last qubit~$k$, controlled on whether the first $k-1$ qubits are equal to~$\xx_{:k-1}$.
We then have
have
\begin{equation}
C_{\xx_{:k-1}}\ket{\yy}\ket{0} = 
\left\{\begin{array}{ll}
\displaystyle \sqrt{q_{\xx_{:k-1}}}\ket{\xx_{:k-1}}\ket{0} + \sqrt{1-q_{\xx_{:k-1}}}\ket{\xx_{:k-1}}\ket{1}, & \quad \text{if } \yy = \xx_{:k-1},\\
\displaystyle \ket{\yy}\ket{0}, \quad\text{for }\yy \ne \xx_{:k-1}.
\end{array}
\right.
\end{equation}
Therefore, defining $\Ug_k := \prod_{\xx_{:k-1}\in\Xx_{k-1}}\Cg_{\xx_{:k-1}}$, 
and noting that the order of multiplication does not affect the computations below, it follows that
\begin{align*}
\Ug_k\sum_{\xx_{:k-1}\in\Xx_{k-1}}\sqrt{p_{\xx_{:k-1}}}\ket{\xx_{:k-1}}\ket{0}^{\otimes n-k+1}
 & = \sum_{\xx_{:k-1}\in\Xx_{k-1}}\left\{\sqrt{p_{\xx_{:k-1}}q_{\xx_{:k-1}}}\ket{\xx_{:k-1}}
 +\sqrt{p_{\xx_{:k-1}}\left(1-q_{\xx_{:k-1}}\right)}\ket{1}\right\}\ket{0}\\
 & \sum_{\xx_{:k}\in\Xx_{k}}\sqrt{p_{\xx_{:k}}}\ket{\xx_{:k}}\ket{0}^{\otimes n-k},
\end{align*}
where the last equality follows from properties of conditional expectations
since 
$$
p_{\xx_{:k-1}} q_{\xx_{:k-1}} = p_{{\xx_{:k-1}}.0}
\qquad\text{and}\qquad
p_{\xx_{:k-1}}\left(1-q_{\xx_{:k-1}}\right)=p_{{\xx_{:k-1}}.1},
$$
for ${\xx_{:k-1}}\in \Xx_{k-1}$, ${\xx_{:k-1}}.0 \in \Xx_{k}$ and ${\xx_{:k-1}}.1 \in \Xx_{k}$
(see after~\eqref{eq:binRepr} for the binary representation of decimals).
This concludes the inductive step.
The generator has therefore been built accordingly to a `classical' algorithm,
however only up until~$\Xx_2$ 
(see Figure~\ref{fig:QGen} for the architecture for qubits~$q_3$ and~$q_2$) 
to avoid to have a network that is too deep and therefore untrainable in a differentiable manner because of the barren plateau phenomenon~\cite{BarrenPlateau}. 
Indeed, in order to build~$\Ug_k$ from simple controlled gates (with only one control qubit) 
the number of gates is of order~$\Oo(2^{k-1})$, making the generator deeper. 
Thus the number of gates we would have to use would be of order~$\Oo(2^{n})$, 
making the generator very expressive yet very hard to train.

\begin{example}
With $n=4$, the architecture for our generator is depicted in Figure~\ref{fig:QGen}
and the full QuGAN (generator and discriminator) algorithm in Figure~\ref{fig:FullQuGAN}.

\begin{figure}[h!]
    \centering
    \includegraphics[scale=0.5]{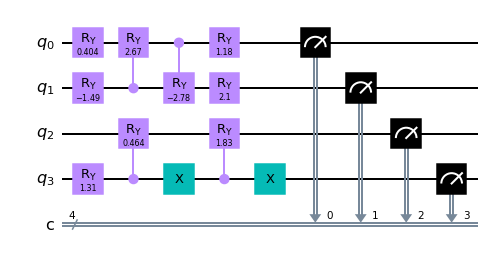}
    \caption{Entangled generator composed of $\RY$, $\CRY$ and $\Xg$ gates, with parameters values for $\{\theta_1,\ldots,\theta_{9}\}$ indicated alongside the gates.}
    \label{fig:QGen}
\end{figure}

\begin{figure}
    \centering
\includegraphics[scale=.5]{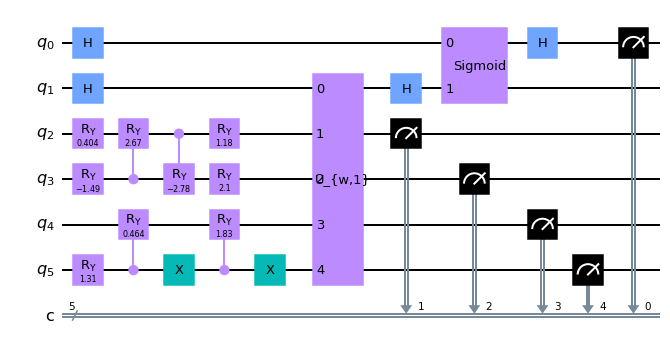}
\caption{The entire associated entangled QuGAN.}
\label{fig:FullQuGAN}
\end{figure}
\end{example}

\subsection{Quantum adversarial game}

In GANs the goal of the discriminator~(D) is to discriminate real~(R) data from the fake ones generated by the generator~(G), while the goal of the latter is to fool the discriminator by generating fake data. 
Here both real and generated data are modeled as quantum states, respectively described by their wave functions
$\ket{\psi_{\tar}}$ and $\ket{v_{\thbg}}$. 
Define the objective function
$$
\Ss(\thbg,\wwd)
:= \PP\Big(\Dr(\wwd)\ket{0}\ket{0}\ket{\psi_{\tar}}\in \Rr_{T}\Big)
 - \PP\Big(\Dr(\wwd)\ket{0}\ket{0}\ket{v_{\thbg}}\in \Rr_{T}\Big),
$$
where the region~$\Rr$ is defined in~\eqref{eq:Regions}.
Here
$\PP(\Dr(\wwd)\ket{0}\ket{0}\ket{\psi_{\tar}}\in \Rr_{T})$
is the probability of labelling the real data $\ket{0}\ket{0}\ket{\psi_{\tar}}$ as real via the discriminator and 
$\PP(\Dr(\wwd)\ket{0}\ket{0}\ket{v_{\thbg}}\in \Rr_{T})$ is the probability of having the generator fool the discriminator. As stated in~\eqref{eq:ProbaProj} for two ancilla qubits ($m_{1}+m_{2}=2$, i.e. one qubit for inner product and one qubit for activation) we have
$$
\PP\Big(\Dr(\wwd)\ket{0}\ket{0}\ket{\psi_{\tar}}\in \Rr_{T}\Big)
= \left\|\Pi_1\Dr(\wwd)\ket{0}\ket{0}\ket{\psi_{\tar}}\right\|^2.
$$
By defining the projection of the output of the discriminator onto $\Rr_T$,
$$
\ket{\psi_{\out,\tar,\wwd}} := \Pi_1\Dr(\wwd)\ket{0}\ket{0}\ket{\psi_{\tar}},
$$ 
we can also write 
$$
\PP\Big(\Dr(\wwd)\ket{0}\ket{0}\ket{\psi_{\tar}}\in \Rr_{T}\Big) = \Tr( \rho_{\out,\tar,\wwd}),
$$ 
where $\rho_{\out,\tar,\wwd}:=\ket{\psi_{\out,\tar,\wwd}}\bra{\psi_{\out,\tar,\wwd}}$
is the density operator associated to $\psi_{\out,\tar,\wwd}$.
The same goes for the probability of fooling the discriminator, namely
$$
\PP\Big(\Dr(\wwd)\ket{0}\ket{0}\ket{v_{\thbg}}\in \Rr_{T}\Big)
=
\left\|\Pi_1\Dr(\wwd)\ket{0}\ket{0}\ket{v_{\thbg}}\right\|^2
=\Tr(\rho_{\out,\thbg,\wwd}),
$$
where 
$\ket{\psi_{\out,\thbg,\wwd}}:=\Pi_1\Dr(\wwd)\ket{0}\ket{0}\ket{v_{\thbg}}$
and 
$\rho_{\out,\thbg,\wwd}:=\ket{\psi_{\out,\thbg,\wwd}}\bra{\psi_{\out,\thbg,\wwd}}$.
The min-max game played by the Generative Adversarial network is therefore defined as the optimisation problem
\begin{equation}\label{eq:MinMax}
    \min_{\thbg}\max_{\wwd} \Ss(\thbg,\wwd).
\end{equation}

Moreover, since~$\Ss$ is differentiable and given the architecture of our circuits, according to the shift rule formula~\cite{ShiftRule}, 
the partial derivatives of~$\Ss$ admit the closed-form representations
\begin{equation}\label{partialderiv1}
\begin{array}{rl}
\displaystyle \nabla_{\thbg}\Ss(\thbg,\wwd)
 & = \displaystyle \half\left\{
 \Ss\left(\thbg+\frac{\pi}{2},\wwd\right)
 - \Ss\left(\thbg-\frac{\pi}{2},\wwd\right)\right\},\\
\displaystyle \nabla_{\wwd}\Ss(\thbg,\wwd)
 &  = \displaystyle \half\left\{
 \Ss\left(\thbg,\wwd+\frac{\pi}{2}\right)
  - \Ss\left(\thbg,\wwd-\frac{\pi}{2}\right)\right\},
\end{array}
\end{equation}
so that training will be based on stochastic gradient ascent and descent.
The reason for a stochastic algorithm lies in the nature of~$\Ss(\thbg,\wwd)$,
seen as the difference between two probabilities to estimate. 
A natural estimator for~$l$ measurements/observations is
$$
\widehat{\Ss}(\thbg,\wwd)_l
:= \frac{1}{l}\sum_{k=1}^{l}
\ind_{\left\{\Dr(\wwd)\ket{0}\ket{0}\ket{\psi_{\tar}^k}\in \Rr_{T}\right\}}
 - \ind_{\left\{\Dr(\wwd)\ket{0}\ket{0}\ket{v^k_{\thbg}}\in \Rr_{T}\right\}},
 $$
where $\ket{v_{\thbg}^k}$ is the $k$-th wave function produced by the generator and $\ket{\psi_{\tar}^k}$ is the $k$-th copy for the target distribution.

Given the nature of the problem, two strategies arise:
for fixed parameters~$\thbg$, when training the discriminator, we first minimise the labelling error, ie.
$$
\max_{\wwd}\Ss(\thbg,\wwd),
$$
which can be achieved by stochastic gradient ascent.
Then, when training the generator the goal is to fool the discriminator, so that,  for fixed~$\wwd$, the target is
$$
\min_{\thbg}\Ss(\thbg,\wwd),
$$ 
performed by stochastic gradient descent. 
\begin{remark}
In the classical GAN setting, this optimisation problem may fail to converge~\cite{GoodfellowFail}.
Over the past few years, progress has been made to improve the convergence quality of the algorithm and to improve its stability, 
using different loss functions or adding regularising terms.
We refer the interested reader to the corresponding papers~\cite{GANImprov4, GANImprov6, GANImprov, GANImprov3, GANImprov2, GANImprov5, GANImprov7},
and leave it to future research to integrate these improvements into a quantum setting.
\end{remark}
\begin{proposition}
The solution $(\thbgs, \ww_{D}^*)$ to the $\min-\max$ problem~\eqref{eq:MinMax} is such that the wave function 
$\ket{v_{\thbgs}}$ satisfies $|\bra{\psi_{\tar}}\ket{v_{\thbgs}}|^2=1$, namely, for each
$i\in\{0,\ldots,2^n-1\}$, 
$$
\PP(\ket{\psi_{\tar}})=\ket{i})=\PP(\ket{v_{\thbgs}}=\ket{i}).
$$
\end{proposition}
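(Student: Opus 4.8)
The strategy is to transpose the classical GAN optimality proof~\cite{goodfellow2014generative} into the quantum formalism, with the Jensen--Shannon divergence replaced by $1-|\bra{\psi_{\tar}}\ket{v_{\thbg}}|^2$. Fix the generator parameter~$\thbg$ and abbreviate $\ket{\real}:=\ket{0}\ket{0}\ket{\psi_{\tar}}$, $\ket{\fake}:=\ket{0}\ket{0}\ket{v_{\thbg}}$ and $c:=\bra{\real}\ket{\fake}=\bra{\psi_{\tar}}\ket{v_{\thbg}}$, so that $|c|\le 1$. Since~$\Pi_1$ is an orthogonal projection and $\Dr(\wwd)$ is unitary, $P_{\wwd}:=\Dr(\wwd)^{\dagger}\Pi_1\Dr(\wwd)$ is again an orthogonal projection, of rank $2^{m-1}$ with $m=m_{1}+m_{2}+n$, and
$$
\Ss(\thbg,\wwd)=\big\|\Pi_1\Dr(\wwd)\ket{\real}\big\|^2-\big\|\Pi_1\Dr(\wwd)\ket{\fake}\big\|^2=\Tr\big(P_{\wwd}M_{\thbg}\big),\qquad M_{\thbg}:=\ket{\real}\bra{\real}-\ket{\fake}\bra{\fake}.
$$

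The first key step is the inner maximisation $\max_{\wwd}\Ss(\thbg,\wwd)$. The operator~$M_{\thbg}$ is Hermitian, traceless, supported on the (at most two-dimensional) span of $\ket{\real}$ and $\ket{\fake}$, and satisfies $\Tr(M_{\thbg}^2)=2(1-|c|^2)$; hence its eigenvalues are $\pm\sqrt{1-|c|^2}$ together with the eigenvalue~$0$ of multiplicity $2^m-2$. By the variational characterisation of sums of the largest eigenvalues (Ky Fan), $\Tr(PM_{\thbg})\le\sqrt{1-|c|^2}$ for every rank-$2^{m-1}$ orthogonal projection~$P$, with equality whenever the range of~$P$ contains the eigenvector of eigenvalue $+\sqrt{1-|c|^2}$ and avoids the one of eigenvalue $-\sqrt{1-|c|^2}$ (the remaining $2^{m-1}-1$ directions, drawn from the $(2^m-2)$-dimensional kernel, being immaterial). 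Granting — exactly as in the classical Nash-equilibrium discussion recalled at the start of Section~\ref{sec:QuGAN} — that the parameterised family $\{\Dr(\wwd)\}$ is rich enough to realise such a projection, we obtain $\max_{\wwd}\Ss(\thbg,\wwd)=\sqrt{1-|\bra{\psi_{\tar}}\ket{v_{\thbg}}|^2}$. This quantity is nonnegative and vanishes iff $|\bra{\psi_{\tar}}\ket{v_{\thbg}}|=1$, so the outer problem $\min_{\thbg}$ has value~$0$, attained (granting that the generator family can reach a state proportional to~$\ket{\psi_{\tar}}$) precisely at the $\thbgs$ with $|\bra{\psi_{\tar}}\ket{v_{\thbgs}}|=1$. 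The equality case of Cauchy--Schwarz for unit vectors then forces $\ket{v_{\thbgs}}=\E^{\I\alpha}\ket{\psi_{\tar}}$ for some $\alpha\in\RR$; since a global phase leaves Born-rule probabilities unchanged, $\PP(\ket{v_{\thbgs}}=\ket{i})=\PP(\ket{\psi_{\tar}}=\ket{i})$ for every basis state~$\ket{i}$, which in view of~\eqref{eq:proba 2} is the assertion. (Conversely, at such a~$\thbgs$ one has $\ket{\fake}=\E^{\I\alpha}\ket{\real}$, whence $\Ss(\thbgs,\wwd)\equiv 0$ and any maximiser $\wwd^*$ completes a genuine saddle point, so the description of the solution set is exhaustive.)

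The step I expect to be the real obstacle — and it is the same one that confines classical and quantum GAN convergence statements to idealised architectures — is the justification of the two expressivity hypotheses used above: that the concrete perceptron-based discriminator of Figure~\ref{fig:QuantumPerceptron} can implement the optimal rank-$2^{m-1}$ projection~$P$, and that the entangled-layer generator of Figure~\ref{fig:QGen} can produce~$\ket{\psi_{\tar}}$ up to a global phase; neither is automatic for a fixed finite circuit. A comparatively minor technical point is to keep the eigenvalue/Ky-Fan argument clean in the degenerate case $|c|=1$ (where $M_{\thbg}=0$) and to confirm the dimension bookkeeping $2^{m-1}-1\le 2^m-2$, valid for all $m\ge 1$.
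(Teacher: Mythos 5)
Your proof is correct and takes essentially the same route as the paper's: both reduce $\Ss(\thbg,\wwd)$ to the trace of the conjugated projector against the difference of density operators and identify the inner maximum with $\tfrac{1}{2}\left\|\rho_{\tar}-\rho_{\thbg}\right\|_1$ (your eigenvalue bound $\sqrt{1-|c|^2}$ is exactly this quantity for pure states), so that the min--max value is zero precisely when the two states coincide up to a global phase. The differences are presentational only — the paper invokes the H\"older/trace-norm inequality where you use the rank-two spectral decomposition and Ky Fan — and you additionally make explicit the discriminator/generator expressivity assumptions that the paper's proof leaves implicit.
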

\begin{proof}
Define the density matrices $\rho_{\tar}:=\ket{\psi_{\tar}}\bra{\psi_{\tar}}$ and 
$\rho_{\thbg}:=\ket{v_{\thbg}}\bra{v_{\thbg}}$
as well as the operator
$P_{\wwd}^R := \Dr(\wwd)^{\dagger}\Pi_1^{\dagger}\Pi_1\Dr(\wwd)$.
Then
$$
\Ss(\thbg,\wwd)= \Tr\left(P_{\wwd}^R\{\rho_{\tar}-\rho_{\thbg}\}\right)
$$
Since $\Pi_1+\Pi_0=\Id$ and $\Dr(\wwd)$ is unitary, setting 
$P_{\wwd}^F := \Dr(\wwd)^{\dagger}\Pi_0^{\dagger}\Pi_0\Dr(\wwd)$, 
it is straightforward to rewrite $\Ss(\thbg,\wwd)$ as
$$
\Ss(\thbg,\wwd)
 = \Tr\left(P_{\wwd}^R\rho_{\tar})+\Tr(P_{\wwd}^F\rho_{\thbg}\right) - 1,
$$
since $\Tr(\rho_{\thbg})=1$ according to the Born Rule (Theorem~\ref{thm:Born}) 
and $P_{\wwd}^R+P_{\wwd}^F=\Id$. 
Again, we also have
$$
\Ss(\thbg,\wwd)
 = -1 + \frac{1}{2}\Tr\left(
 \left(P_{\wwd}^R+P_{\wwd}^F\right)
 \left(\rho_{\tar}+\rho_{\thbg}\right)\right)
 + \frac{1}{2}\Tr\left(
 \left(P_{\wwd}^R-P_{\wwd}^F\right)
 \left(\rho_{\tar}-\rho_{\thbg}\right)\right),
$$
and finally
$$
\Ss(\thbg,\wwd)
 = \frac{1}{2}\Tr\left(\left(P_{\wwd}^R-P_{\wwd}^F\right)\left(\rho_{\tar}-\rho_{\thbg}\right)\right).
$$
Recall that for two Hermitian matrices $A, B$, the inequality
$\Tr(AB)\leq \|A\|_p \|B\|_q$ holds
for $p,q\geq 1$ with $\frac{1}{p}+\frac{1}{q}=1$,
where $\|\cdot\|_p$ denotes the $p$-norm.
Since $P_{\wwd}^R$ and $P_{\wwd}^F$ are Hermitian, 
we obtain (with $p=\infty$ and $q=1$)
$$
\Ss(\thbg,\wwd)\leq\frac{1}{2}
\left\|P_{\wwd}^R-P_{\wwd}^F\right\|_{\infty}
\left\|\rho_{\tar}-\rho_{\thbg}\right\|_1,
$$
where $\left\|P_{\wwd}^R-P_{\wwd}^F\right\|_{\infty}\leq1$.
Thus the optimal $\wwd^*$ satisfies
$$
\max_{\wwd}\Ss(\thbg,\wwd)=\Ss(\thbg,\wwd^*)=\frac{1}{2}\left\|\rho_{\tar}-\rho_{\thbg}\right\|_1.
$$
Again, since $\|\rho_{\tar}-\rho_{\thbg}\|_1\geq0$ the optimal $\thbg^*$ gives
$$
\min_{\thbg}\max_{\wwd}\Ss(\thbg,\wwd)=\Ss(\thbg^*,\wwd^*)=0,
$$
which is equivalent to 
$\|\rho_{\tar}-\rho_{\thbg}\|_1=0$,
itself also equivalent to
$\PP(\ket{v_{\thbg^*}}=\ket{i})=\PP(\ket{\psi_{\tar}}=\ket{i})=p_i$, 
for all $i \in \{0,\ldots,2^n-1\}$.
\end{proof}

\begin{remark}
Our strategy to reach and approximate a solution to the $\min-\max$ problem will be as follows:
we train the discriminator by stochastic gradient ascent~$n_D$ times and then train the generator~$n_G$ times 
by stochastic gradient descent and repeat this~$\ek$ times.
\end{remark}

\section{Financial application: SVI goes Quantum}\label{sec:FinAppli}
We provide here a simple example of generating data in a financial context with the aim to increase interdisciplinarity between quantitative finance and quantum computing.

\subsection{Financial background and motivation}
Some of the most standard and liquid traded financial derivatives are so-called European Call and Put options.
A Call (resp. Put) gives its holder the right, but not the obligation, to buy (resp. sell) an asset at a specified price (the strike price~$K$) at a given future time (the maturity~$T$).
Mathematically, the setup is that of a filtered probability space $(\Omega, \Ff,(\Ff_t)_{t\geq0}, \PP)$
where~$(\Ff_t)_{t\geq 0}$ represents the flow of information; on this space, an asset~$S=(S_t)_{t\geq0}$ is traded and assumed to be adapted (namely $S_t$ is $\Ff_t$-measurable for each $t\geq 0$).
We further assume that there exists a probability~$\QQ$, equivalent to~$\PP$ such that $S$ is a $\QQ$-martingale.
This martingale assumption is key as the Fundamental Theorem of Asset Pricing~\cite{FTAP} in particular implies that this is equivalent to Call and Put prices being respectively equal, at inception of the contract, to
$$
\Cr(K,T) = \EE[\max(S_T-K, 0)|\Ff_0]
\qquad\text{and}\qquad
\Put(K,T) = \EE[\max(K-S_T, 0)|\Ff_0],
$$
where the expectation~$\EE$ is taken under the risk-neutral probability~$\QQ$.
Under sufficient smoothness property of the law of~$S_T$, differentiating twice the Call price yields that the probability density function of the log stock price $\log(S_T)$ is given by
\begin{equation}\label{proba_log_stock}
    p_T(k) = \left(\frac{\partial^2\Cr(K,T)}{\partial K^2}\right)_{K=S_0\E^{k}},
\end{equation}
implying that the real distribution of the (log) stock price can in principle be recovered from options data.
However, prices are not quoted smoothly in $(K,T)$ and
interpolation and extrapolation are needed.
Doing so at the level or prices turns out to be rather cumbersome and market practice usually does it at the level of the so-called implied volatility.
The basic fundamental model of a continuous-time financial martingale is given by the Black-Scholes model~\cite{Black-Scholes}, under which 
$$
\frac{\D S_t}{S_t} = \sigma \D W_t,
\qquad S_0>0,
$$
where $\sigma>0$ is the (constant) instantaneous volatility and~$W$ a standard Brownian motion adapted to the filtration $(\Ff_t)_{t\geq 0}$.
In this model, Call prices admit the closed-form formula
$$
\Cr_{\BS}(K,T,\sigma) :=\EE[\max(S_T-K, 0)|\mathcal{F}_0] = S_0 \BS\left(\log\left(\frac{K}{S_0}\right), \sigma^2 T\right),
$$
where 
$$
\BS(k,v) := \left\{
    \begin{array}{ll}
        \displaystyle \mathcal{N}(d_+(k,v)) - \E^k\mathcal{N}(d_-(k,v)), & \mbox{if } v>0, \\
        (1-\E^k)_+, & \mbox{if } v=0,
    \end{array}
\right.
$$
with $d_{\pm}(k,v):=-\frac{k}{\sqrt{v}} \pm \frac{\sqrt{v}}{2}$, where $\mathcal{N}$ denotes the cumulative distribution function of the Gaussian distribution. 
With a slight abuse of notation, we shall from now on write
$C_{\BS}(K,T,\sigma)=C_{\BS}(k, T, \sigma)$, 
where $k:= \log(\frac{K}{S_0})$ represents the logmoneyness.

\begin{definition}
Given a strike~$K\geq 0$, a maturity~$T\geq 0$ and a Call price $C(K,T)$ (either quoted on the market or computed from a model), 
the implied volatility $\sigma_{\imp}(k,T)$ is defined as the unique non-negative solution to the equation
\begin{equation}\label{implied vol}
    \Cr_{\BS}(k, T, \sigma_{\imp}(k,T))=\Cr(K,T).
\end{equation}
\end{definition}
Note that this equation may not always admit a solution.
However, under no-arbitrage assumptions (equivalently under bound constraints for $C(K,T)$), it does so.
We refer the interested reader to the volatility bible~\cite{GatheralBook} for full explanations of these subtle details.
It turns out that the implied volatility is a much nicer object to work with (both practically and academically); plugging this definition into~\eqref{proba_log_stock}
yields that the map $k\mapsto \sigma_{\imp}(k,T)$ fully characterises the distribution of~$\log(S_T)$ as
\begin{equation}\label{imp_proba_log_stock}
    p_T(k) 
= \left(\frac{\partial^2 \Cr_{\BS}(k, T, \sigma_{\imp}(k,T))}{\partial K^2}\right)_{K=S_0\E^{k}}.
\end{equation}
While a smooth input $\sigma_{\imp}(\cdot,T))$ is still needed, it is however easier than for option prices.
A market standard is the Stochastic Volatility Inspired (SVI) parameterisation proposed by Gatheral~\cite{SVI} (and improved in~\cite{SSVI,GSSVI}),
where the total implied variance 
$w_{\SVI}(k,T):=\sigma_{\imp}^2(k,T)T$ is assumed to satisfy
\begin{equation}\label{eq:SVI}
w_{\SVI}(k,T) = a+b\left(k-m + \rho\sqrt{(k-m)^2+\xi^2}\right),
\qquad\text{for any }k \in \RR,
\end{equation}
with the parameters $\rho\in [-1,1]$, $a,b,\xi \geq 0$ and $m \in \RR$. 
The probability density function~\eqref{proba_log_stock} of the log stock price then admits the closed-form expression~\cite{SVI}
\begin{equation}\label{eq:closed_form_BS}
p_T(k)
= \frac{g_{\SVI}(k,T)}{\sqrt{2\pi w_{\SVI}(k, T)}}\exp\left\{-\frac{d_{-}(k,w_{\SVI}(k,T))^2}{2}\right\},
\end{equation}
where 
$$
g_{\SVI}(k,T)
 := \left(1-\frac{k w'_{\SVI}(k,T)}{2 w_{\SVI}(k,T)}\right)^2
  - \frac{w'_{\SVI}(k,T)^2}{4}\left(\frac{1}{4}+\frac{1}{w_{\SVI}(k,T)}\right) + \frac{w''_{\SVI}(k,T)}{2},
$$
where all the derivatives are taken with respect to~$k$.
In Figure~\ref{fig:SVIDensity}, we plot the typical shape of the implied volatility smile, together with the corresponding density for the following parameters:
\begin{equation}\label{eq:SVIParams}
a =0.030358 ,\qquad
b = 0.0503815,\qquad
\rho = -0.1 ,\qquad
m =0.3 ,\qquad
\xi = 0.048922 ,\qquad
T = 1.
\end{equation}

\begin{figure}[h!]
    \centering
    \includegraphics[scale=0.5]{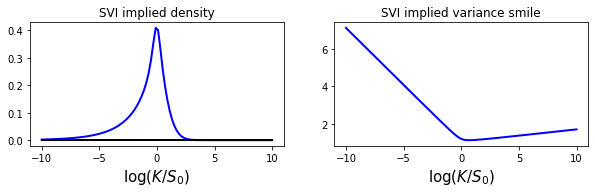}
    \caption{Density of~$\log(S_T)$ computed from~\eqref{eq:closed_form_BS} and the corresponding SVI total variance~\eqref{eq:SVI}. 
    The parameters are given in~\eqref{eq:SVIParams}.}
    \label{fig:SVIDensity}
\end{figure}

\subsection{Numerics}\label{Data of study}
The goal of this numerical part is to be able to generate discrete versions of the SVI probability distribution given in~\eqref{eq:closed_form_BS}.
Our target distribution shall be the one plotted in Figure~\ref{fig:SVIDensity}, corresponding to the parameters~\eqref{eq:SVIParams}.
Since the Quantum GAN (likewise for the classical GAN) algorithm starts from a discrete distribution, 
we first need to discretise the SVI one.
For convenience, we normalise the distribution on the closed interval $[-1,1]$ and discretise with the uniform grid.
$$
\left\{\left\lfloor(2^n-1)\left(\frac{k+1}{2}\right)\right\rfloor\right\}_{k=0,\ldots, 2^n-1},
$$
which we then convert into binary form.
This uniform discretisation does not take into account the SVI probability masses at each point,
and a clear refinement would be to use a one-dimensional quantisation of the SVI distribution.
Indeed, the latter (see~\cite{QuantizationPages} for full details about the methodology) minimises the distance 
(with respect to some chosen norm) between the initial distribution and its discretised version.
We leave this precise study and its error analysis to further research, in the fear that it would clutter the present description of the algorithm.
The discretised distribution, with~$n$ qubits, together with the binary mapping, is plotted in Figure~\ref{fig:distrib} and gives rise to the wave function
$$
\ket{\psi_{\tar}}
 = \sum_{i=0}^{2^n-1}\sqrt{p_i}\ket{i},
 $$
where, for each $i \in \{0,\ldots,2^n-1\}$, 
$$
p_i
 = \PP\left(\log(S_T)\in\Bigg[
 -1+\frac{2i}{2^n},-1+\frac{2(i+1)}{2^n}
 \Bigg)\right).
$$
\begin{figure}
\centering
\includegraphics[scale=0.4]{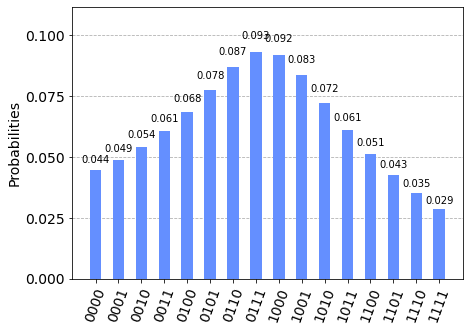}
\caption{Discretised version for the distribution of~$\log(S_T)$ on $[-1,1]$ with $2^4$ points.}
\label{fig:distrib}
\end{figure}

We need metrics to monitor the training of our QuGAN algorithm,
for example the Fidelity function~\cite[Chapter 9.2.2]{QBook}
$$
\Ff: \ket{v_1},\ket{v_2}\in \CC^{2^n}\times \CC^{2^n} \mapsto |\bra{v_1}\ket{v_2}|,
$$
so that for the wave function~\eqref{eq:WaveFunction} $\ket{v_{\thbg}}=\sum_{i=0}^{2^n-1}v_{i,\thbg}\ket{i}$, the goal is to obtain $\Ff\left(\ket{v_{i,\thbg}},\ket{\psi_{\tar}}\right)=1$, 
which gives
$\PP(\ket{v_{\thbg}} = \ket{i}) = \left|v_{i,\thbg}\right|^2=p_i$,
for all $i\in \{0,\ldots,2^n-1\}$.
The Kullback-Leibler Divergence is also a useful monitoring metric, defined as
$$
\KL(\ket{\psi_{\tar}},\ket{v_{\thbg}})
:=\sum_{i=0}^{2^n-1}p_i\log\left(\frac{p_i}{\left|v_{i,\thbg}\right|^2}\right).
$$

\subsubsection{Training and generated distributions}\label{Training and}
In the training of the QuGAN algorithm, 
in each epoch~$\ek$, we train the discriminator $n_D=9$ times and the generator $n_G=1$. 
The results, in Figure~\ref{Training and}, are quite interesting
as the QuGAN manages to overall learn the SVI distribution.
Aside from the limited number of qubits,
the limitations however could be explained via the expressivity of our network which is only parameterised via $(\theta_i)_{i\in\{1,\ldots,9\}}$ and $(w_i)_{i\in\{1,\ldots,4\}}$ which is clearly not enough. This lack of expressivity is a choice, 
and more parameters deepen the network,
but can create a barren plateau phenomenon~\cite{BarrenPlateau}, 
where the gradient vanishes in $\mathcal{O}(2^{-d})$ where~$d$ is the depth of the network. 
This would in turn require an exponentially larger number of shots to obtain a good enough estimation of~\eqref{partialderiv1},
thereby creating a trade-off between expressivity and trainability in a differentiable manner.

\begin{figure}[h!]\label{Final Results2}
    \centering
    \subfloat[Fidelity during QuGAN training.]{\includegraphics[scale=0.25]{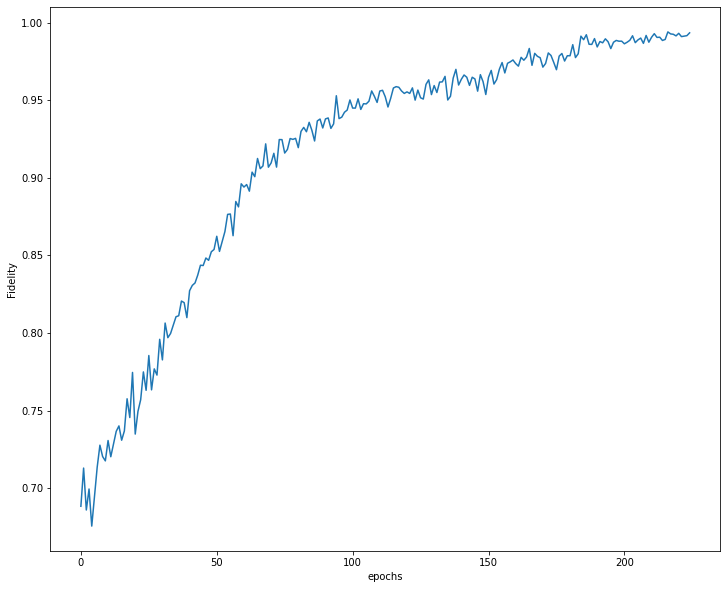}}
    \qquad
    \subfloat[KL Divergence while training the QuGAN.]{\includegraphics[scale=0.25]{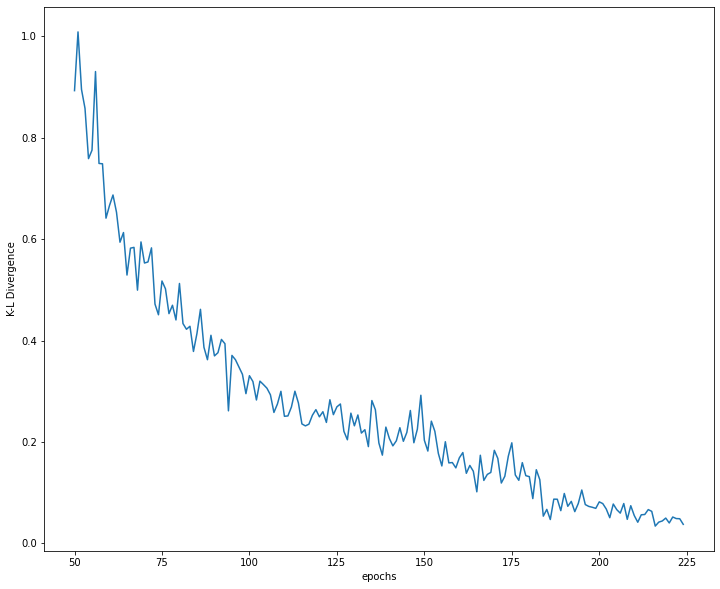} }
\end{figure}

\begin{figure}[h!]\label{Final Results3}
    \subfloat[Evolution of $\|\ket{v_{\thbg}}\bra{v_{\thbg}}-\ket{\psi_{\tar}}\bra{\psi_{\tar}}\|_1$ during QuGAN training. ]{\includegraphics[scale=0.25]{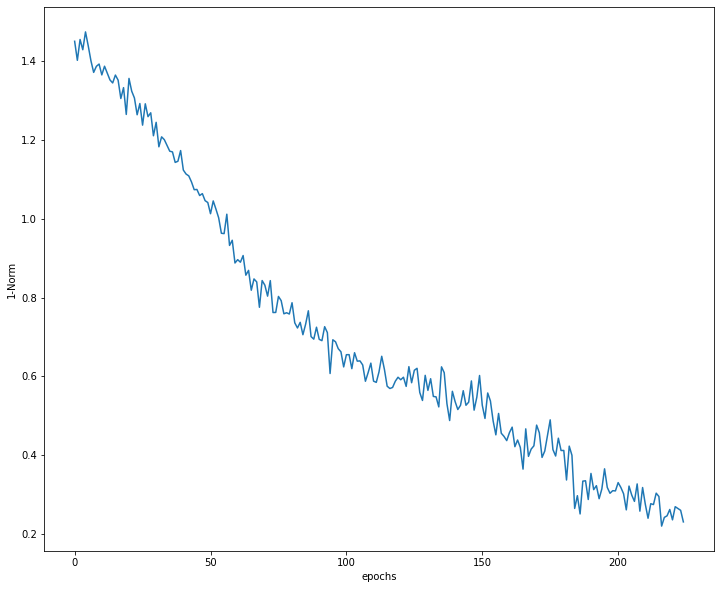} }
    \qquad
    \subfloat[Score function during QuGAN training.]{\includegraphics[scale=0.25]{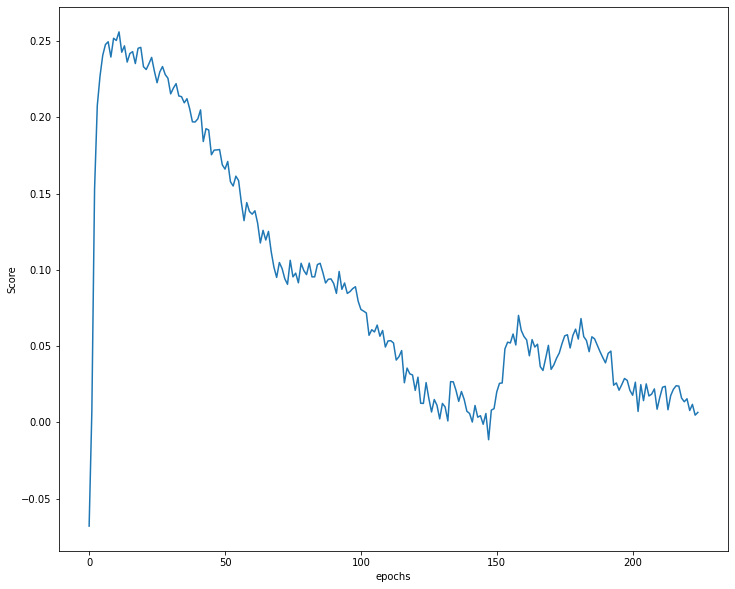}}
\end{figure}

\begin{figure}[h!]\label{Final Results4}
    \subfloat[Generated distribution at $\ek=0$, with uniform random initialisation of  $(\theta)_{i\in\{1,\ldots,9\}}$. ]{\includegraphics[scale=0.4]{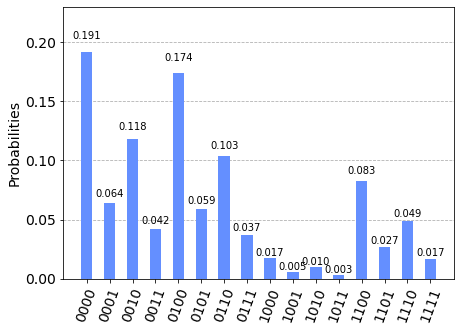}}
    \qquad
    \subfloat[Comparison between the target and the generated distributions at the end of the training.]{\includegraphics[scale=0.4]{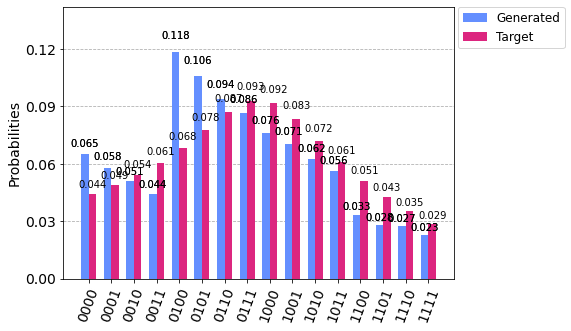}}
\end{figure}

All the numerics in the paper were performed using the \texttt{IBM-Qiskit} library in \texttt{Python}.

\bibliographystyle{siam}
\bibliography{ref}

\appendix
\section{Review of Quantum Computing techniques and algorithms}\label{sec:Review}
{\small
In Quantum mechanics the state of a physical system is represented by a ket vector~$\ket{v}$ of a Hilbert space~$\Hh$, often $\Hh=\CC^{2^n}$.
Therefore, for a basis $(\ket{0}, \ldots, \ket{2^n-1})$ of~$\Hh$, we obtain  the wave function 
$\ket{v}=\sum_{j=0}^{2^n-1}v_j\ket{j}$.
The Hilbert space is endowed with the inner product $\braket{v|w}$ between two states~$\ket{v}$ and~$\ket{w}$, where $\bra{v}:=\ket{v}^{\dagger}$ is the conjugate transpose.
Recall that a pure quantum state is described by a single ket vector, 
whereas a mixed quantum state cannot.
The following are standard in Quantum Computing,
and we recall them simply to make the paper self-contained.
Full details about these concepts can be found in the excellent monograph~\cite{QBook}.

\begin{theorem}[Born's rule]\label{thm:Born}
If $\ket{v} \in \CC^{2^n}$ be a pure state, then $\|v\|=1$.
\end{theorem}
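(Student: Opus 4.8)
The plan is to unpack what a ``pure state'' means operationally and to read off the norm condition from the probabilistic content of Born's rule. First I would fix the computational basis $(\ket{0},\ldots,\ket{2^n-1})$ of $\CC^{2^n}$ and write $\ket{v}=\sum_{j=0}^{2^n-1}v_j\ket{j}$. Measuring $\ket{v}$ in this basis produces outcome~$j$ with probability $p_j:=|v_j|^2$; since the events $\{\text{outcome}=j\}$ for $j=0,\ldots,2^n-1$ form a complete family of mutually exclusive alternatives, their probabilities must sum to one, so $\sum_{j=0}^{2^n-1}p_j=1$.

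Next I would translate this back into Hilbert-space language. One has
$$
\|v\|^2=\braket{v|v}=\sum_{j=0}^{2^n-1}\overline{v_j}\,v_j=\sum_{j=0}^{2^n-1}|v_j|^2=1,
$$
hence $\|v\|=1$. Equivalently, in the density-operator picture the pure state is represented by $\rho:=\ket{v}\bra{v}$, whose defining normalisation is $\Tr(\rho)=1$; since $\Tr\big(\ket{v}\bra{v}\big)=\braket{v|v}=\|v\|^2$, the same conclusion follows. I would also remark that $\|v\|$ is independent of the chosen orthonormal basis, so the normalisation is intrinsic to the state and not an artefact of the computational basis.

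The main difficulty here is conceptual rather than technical: the statement is essentially the normalisation \emph{convention} that renders the Born-rule probabilities consistent, so the ``proof'' amounts to recording that $\sum_j|v_j|^2=1$ is exactly the requirement that those probabilities sum to one, together with the elementary identity $\|v\|^2=\braket{v|v}$. There is no estimate to grind through; the only point worth stating carefully is the equivalence between ``pure state'' (a single normalised ket) and ``density operator of unit trace and rank one'', which justifies invoking $\Tr(\rho)=1$ interchangeably with $\|v\|=1$.
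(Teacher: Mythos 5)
The paper itself offers no proof of this statement: Theorem~\ref{thm:Born} is listed in the appendix among facts ``recalled simply to make the paper self-contained,'' so there is no argument to compare yours against. Your write-up is correct and is essentially the only thing one can say: the claim is the normalisation convention defining a pure state, and once the Born probabilities $p_j=|v_j|^2$ are required to sum to one, the identity $\|v\|^2=\braket{v|v}=\sum_{j=0}^{2^n-1}|v_j|^2$ gives $\|v\|=1$ immediately; the density-operator reformulation $\Tr\left(\ket{v}\bra{v}\right)=\|v\|^2=1$ is the same fact in other notation. The only caveat worth keeping in mind is the mild circularity you already acknowledge: within the paper's own framework the probability assignment~\eqref{eq:proba} and the condition $\sum_j p_j=1$ are not derived but postulated, so your ``proof'' is really the observation that the theorem restates that postulate --- which is exactly how the paper intends it.
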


Given a pure state $\ket{v}=\sum_{j=0}^{2^n-1}v_j\ket{j}$, the probability of measuring $\ket{v}$ collapsing onto the state $\ket{j}$ for $j \in \{0,\ldots,2^n-1\}$ is defined via
\begin{equation} \label{eq:proba}
    \PP(\ket{v}=\ket{j})=|\braket{j|v}|^2
     = \Tr\left(\ket{j}\bra{j} \ket{v}\bra{v}\right)
    =|v_j|^2,
\end{equation}
where $\Tr$ is the Trace operator.
Moreover, for a given state~$\ket{v}$, its density matrix is defined as $\rho_v:=\ket{v}\bra{v}$.

\subsection{Quantum Fourier transform}
In the classical setting, 
the discrete Fourier transform maps a vector $(x_0,\ldots,x_{2^n-1})\in\CC^{2^n}$ to
\begin{equation}\label{fourier}
y_k=\frac{1}{\sqrt{2^n}}\sum_{j=0}^{2^n-1}
\exp\left\{\frac{2\I \pi jk}{2^n}\right\}x_j, 
\qquad\text{for }k =0,\ldots, 2^n-1.
\end{equation}
Similarly, the quantum Fourier transform is the linear operator 
\begin{equation}\label{QFT}
    \ket{j}\mapsto\frac{1}{\sqrt{2^n}}\sum_{k=0}^{2^n-1}\exp\left\{\frac{2\I \pi jk}{2^n}\right\}\ket{k},
\end{equation}
and the operator
$$
\Ffq := \frac{1}{\sqrt{2^n}}\sum_{j,k=0}^{2^n-1}\exp\left\{\frac{2\I \pi jk}{2^n}\right\}\ket{k}\bra{j}
$$ 
represents the Fourier transform matrix which is unitary as $\Ffq*\Ffq^{\dagger}=\Id$.
In an  $n$-qubit system ($\Hh=\CC^{2^n}$) 
with basis $(\ket{0},\ldots,\ket{2^n-1})$; 
for a given state~$\ket{j}$, we use the binary representation 
\begin{equation}\label{eq:binRepr}
j := \overline{j_1\cdots j_n},
\end{equation}
with $(j_1, \ldots, j_n)\in\{0,1\}^n$ so that
$\ket{j}=\ket{j_1\cdots j_n}=\ket{j_1}\otimes\ldots\otimes \ket{j_n}$.
Likewise, the notation $0.j_1j_2 \ldots j_n$ represents
the binary fraction $\sum_{i=1}^{n}2^{-i} j_{i}$.
Elementary algebra then yields
\begin{equation}\label{final fourier}
    \Ffq \ket{j}
    = \frac{1}{2^{\frac{n}{2}}}
    \left(\ket{0}+\E^{2\I\pi 0.j_n }\ket{1}\right)
    \otimes
    \left(\ket{0}+\E^{2\I\pi 0.j_{n-1}j_n }\ket{1}\right)
    \otimes\cdots\otimes
    \left(\ket{0}+\E^{2\I\pi 0.j_1\ldots j_n} \ket{1}\right).
\end{equation}

\subsection{Quantum phase estimation (QPE)}\label{sec:QPE}
The goal of QPE is to estimate the unknown phase $\phi \in [0,1)$
for a given unitary operator~$\Ug$ with an eigenvector~$\ket{u}$ and eigenvalue $\E^{2\I\pi \phi}$. 
Consider a register of size~$\qt$, so that $\Hh=\CC^{2^{\qt}}$ and define
$
b^* := \sup_{j\leq2^{\qt}\phi}\left\{j = 2^{\qt} 0.j_1\cdots j_{\qt}\right\}$.
Thus with $b^*=\overline{b_1\cdots b_{\qt}}$, 
we obtain that $2^{-\qt} b^*=0.b_1\cdots b_{\qt}$ 
is the best $\qt$-bit approximation of~$\phi$ from below.
The quantum phase estimation procedure uses two registers. 
The first register contains the~$\qt$ qubits initially in the state~$\ket{0}$. 
Selecting~$\qt$ relies on 
the number of digits of accuracy for the estimate for~$\phi$, and the probability for which we wish to obtain a successful phase estimation procedure. 
Up to a \textrm{SWAP} transformation, the quantum phase circuit gives the output
$$
\ket{\psi_{\out}}=
\frac{\left(\ket{0}+\E^{2\I\pi 0.\phi_m}\ket{1}\right)
\otimes
\left(\ket{0}+\E^{2\I\pi 0.\phi_{m-1}\phi_m}\ket{1}\right)
\otimes
\cdots
\otimes
\left(\ket{0}+\E^{2\I\pi 0.\phi_1\ldots\phi_m} \ket{1}\right)
}{2^{\frac{m}{2}}},
$$
which is exactly equal to the Quantum Fourier Transform for the state $\ket{2^m\phi}=\ket{\phi_1\phi_2\ldots\phi_m}$ as in~\eqref{final fourier}, and therefore
$\ket{\psi_{state}}=\Ffq\ket{2^m\phi}$.
Since the Quantum Fourier Transform is a unitary transformation, 
we can inverse the process to retrieve $\ket{2^m\phi}$.
Algorithm~\ref{algo:QPE} below provides pseudo-code for the Quantum Phase Estimation procedure 
and we refer the interested reader to~\cite[Chapter 5.2]{QBook}
for detailed explanations.

\begin{algorithm}
\begin{algorithmic}
      \State{\textbf{Input:} Unitary matrix~$\Ug$ with $\Ug\ket{u} = \E^{2\I\pi \phi}\ket{u}$; $\qt=n+ \lceil \log(2+\frac{1}{2\eps})\rceil$ ancilla qubits initialised at $\ket{0}$.} 
     \newline
      \State{\textbf{Procedure:}} 
      \newline
    \State \textbf{1}.  $\ket{0}^{\otimes \qt}\ket{u}$ \Comment{Initial state with $\ket{0}^{\otimes \qt}$ being the ancilla register and $\ket{u}$ the eigenstate register} 
    \newline
    \State \textbf{2}. $\displaystyle\longrightarrow \frac{1}{\sqrt{2^\qt}}\sum_{j=0}^{2^\qt-1}\ket{j}\ket{u}$ \Comment{Hadamard gates applied to the ancilla register}
    \newline
    \State \textbf{3}. $\displaystyle\longrightarrow \frac{1}{\sqrt{2^\qt}}\sum_{j=0}^{2^\qt-1}\ket{j}\Ug^j\ket{u} = \frac{1}{\sqrt{2^\qt}}\sum_{j=0}^{2^\qt-1}\ket{j}\E^{2\I\pi j\phi}\ket{u}$ \Comment{Controlled~$\Ug^j$ gates applied to the eigenstate register
    }
    \newline
    \State \textbf{4}.$ \xrightarrow[]{} \ket{\widetilde{\phi}}\ket{u}$ \Comment{Apply the inverse QFT where $\widetilde{\phi}$ is a $\qt$-qubit approximation of~$\phi$ with an accuracy of $2^{-n}$}
    \newline\State \textbf{5}.$ \xrightarrow[]{} \widetilde{\phi}$ \Comment{Measure $\widetilde{\phi}$ with a probability at least $1-\eps$       }
\newline
    \Return \textbf{Output:} $\widetilde{\phi}$
\end{algorithmic}
\caption{$\text{Quantum Phase Estimation}~(\Ug,\ket{u},\qt, \eps)$}
\label{algo:QPE}
\end{algorithm}
}
\end{document}